\newtheorem{theorem}{Theorem}
\newtheorem{lemma}[theorem]{Lemma}
\newtheorem{prop}[theorem]{Proposition}
\newtheorem{corollary}[theorem]{Corollary}
\newcounter{example}  
\newenvironment{example}[1][]{\refstepcounter{example}\par\medskip
   \noindent \textit{Example~\theexample. #1} \rmfamily}{\medskip}
\newcommand{\cC}{\mathcal{C}}
\newcommand{\cD}{\mathcal{D}}
\newcommand{\cR}{\mathcal{R}}
\newcommand{\cX}{\mathcal{X}}
\newcommand{\bc}{\mathbf{c}}
\newcommand{\bd}{\mathbf{d}}
\newcommand{\bu}{\mathbf{u}}
\newcommand{\bv}{\mathbf{v}}
\DeclareMathOperator{\supp}{supp}
\title{Optimal Quaternary $(r,\delta)$-Locally Repairable Codes Achieving the Singleton-type Bound}
\author{Kenneth W. Shum\thanks{Kenneth W. Shum is with School of Science and Engineering, The Chinese University of Hong Kong (Shenzhen), Shenzhen, 518055, China.} and Jie Hao
\thanks{Jie Hao is with the Information Security Center, Beijing University of Posts and Telecommunications, Beijing, 100876, China.}}
\begin{document}
\maketitle

\begin{abstract}
Locally repairable codes enables fast repair of node failure in a distributed storage system. The code symbols in a codeword are stored in different storage nodes, such that a disk failure can be recovered by accessing a small fraction of the storage nodes. The number of storage nodes that are contacted during the repair of a failed node is a parameter called locality.  We consider locally repairable codes that can be locally recovered in the presence of multiple node failures.  The punctured code obtained  by removing the code symbols in the complement of a repair group is called a local code. We aim at designing a code such that all local codes have a prescribed minimum distance, so that any node failure can be repaired locally, provided that the total number of node failures is less than the tolerance parameter. We consider linear locally repairable codes defined over a finite field of size four. This alphabet has characteristic 2, and hence is amenable to practical implementation. We classify all quaternary locally repairable codes that attain the Singleton-type upper bound for minimum distance. For each combination of achievable code parameters, an explicit code construction is given.

\end{abstract}

{\bf Keywords:} Locally repairable codes, maximal distance separable codes, Singleton-like bound, codes with small defect, subspace codes.

\section{Introduction}

The advent of locally repairable codes is motivated by their application in distributed storage systems. A data file is first divided into chunks of $k$ data symbols. Each chunk is encoded separately by an $[n,k]$ linear code. The $n$ encoded symbols are distributed and stored in $n$ different storage nodes. In this paper we will use the terms ``storage node'' and ``disk'' interchangeably. If a disk failure occurs, we replace the failed node by a new one, and recover the lost data with the help of the surviving storage nodes. A basic recovery method is to contact $k$ other storage nodes, decode the original data file, and then re-encode the lost data. This recovery method incurs excessive network traffic as whole data file is downloaded, but in fact only a small portion of the data file is required. 

The network traffic generated during the course of repairing a failed node is termed {\em repair bandwidth} in~\cite{Dimakis2010}. The authors in~\cite{Dimakis2010} suggest that the new node can contact and download data from all surviving nodes, and each of the other surviving nodes send minimum amount of coded data to the new node, so as to minimize the repair bandwidth. An information-theoretic lower bound on the resulting repair bandwidth is derived, and is shown to be much smaller than the size of the original data file.  This type of encoding schemes that minimize repair bandwidth is called {\em regenerating codes}. 

Another approach is to minimize the number of contacted nodes during the repair process. In this repair model, all coded symbols in the contacted node are sent to the new node. The number of contacted nodes is a code parameter called {\em locality}, which is commonly denoted by the letter~$r$. A code symbol is said to have locality $r$ if it is a deterministic function of $r$ other code symbols. A code with this kind of local property is called a {\em locally repairable code}, or {\em locally recoverable code} (LRC). There are two categories of LRCs. In the first category, we only require that there is a set of $k$ information symbols such that each of them has locality $r$; the parity-check symbols need not be locally recoverable. A code with this property is said to be a code with {\em information locality}. A fundamental bound on the LRC with information locality is derived in~\cite{Gopalan}. An implementation of this idea in Hadoop file system is investigated in~\cite{SAPDVCB13}. In the second category of LRCs, all coded symbols are required to have locality $r$. A code with this type of local property is said to have {\em all-symbol locality}. Obviously, the latter imposes a more stringent requirement on the code design in compare to the former, and the bound in~\cite{Gopalan} is also valid for LRCs with all-symbol locality. In this paper, we will consider LRCs with all-symbol locality. There are many results in the literature about LRCs. We refer the readers to~\cite{Siberstein13, TamoBarg14 , Song14, Papailiopoulos14, CadambeMazumdar15, HYUS16, GXY19} for a partial list.

In~\cite{PKLK12}, the repair model mentioned in the previous paragraph is extended to enable local repair under multiple node failures. 
The authors in~\cite{PKLK12} give an upper bound on the minimum distance that extends the classical Singleton bound to codes with local properties. A refinement of this bound is recently derived in~\cite{Cai2021}.  There are a number of constructions of optimal code in this multiple-failure setting. In the literature, we can find constructions using cyclic or constacyclic codes~\cite{CXHF18,CFXF19,CH19, QZF21}, generalized RS codes~\cite{Zhang2020}, and Steiner system~\cite{CS21}, and finite geometry\cite{FLY}, etc. Another source of code constructions is algebraic geometry. Constructions of LRC using algebraic curves and algebraic surfaces are studied in~\cite{Barg17, LMX19,Salgado21}. In this paper, we will focus on optimal linear codes that attain the Singleton-like bound in the multiple-failure setting.

Code design for practical distributed storage systems is limited by computational complexity. The speed of encoding and decoding algorithms is affected by the choice of alphabet. Many existing code constructions of LRC, including those that are based on Reed-Solomon codes and cyclic codes, require the field size to be larger than the code length. For example, families of LRCs whose lengths grow super-linearly as a function of the alphabet size are constructed in~\cite{Cai2020,KWG21}. We take a slightly different perspective and enumerate the code parameters of all optimal LRCs over a fixed finite field. For field size two and three, binary and ternary locally repairable codes attaining the Singleton-like bound are already characterized in~\cite{HaoICC2017, Hao2017, Hao2019, Hao2020}. In this paper, we classify optimal LRCs over $GF(4)$ attaining the Singleton-like bound for LRCs that can locally correct multiple node erasures. A complete enumeration of all possible code parameters is obtained (see Table~\ref{table:code_parameters}). 

An explicit code construction is given to each combination of available code parameters.   Some of the construction is presented using a parity-check matrix which has the structure of {\em generalized tensor product} (GTP).
Code constructions using GTP was proposed by Imai and Fujiya in~\cite{IF81}, which extended the result in~\cite{Wolf65}. Construction of LRC using generalized tensor product is explored in~\cite{HYS20}. However, extension fields of $GF(q)$ are considered in~\cite{IF81} and \cite{HYS20}. In this paper we present a simplified  version if it as field extension is not required. The structure of the code is equivalent to a {\em generalized integrated interleaved code} \cite{BlaumHetzler18, Blaum20, LLW20}, and {\em matrix-product code}~\cite{BlackmoreNorton, OzbudakStichtenoth,LEL2021}.  We remark that the GTP construction is for LRC with disjoint repair groups. In this paper we do not assume that the repair groups are disjoint. Indeed, some of the optimal codes belong to the families of simplex codes and ovoid codes, and they have overlapping repair groups

This paper is organized as follows. In Section~\ref{sec:preliminaries} we review some basic results of LRC. In Section~\ref{sec:GTP} we give a simplified version of generalized tensor product, which will be used for code constructions in this paper. In Section~\ref{sec:r=1}, we classify the optimal quaternary LRCs with locality $r=1$. In Sections~\ref{sec:d=3} and \ref{sec:d=4}, we enumerate the optimal quaternary LRCs with distance 3 and~4, respectively. For distance larger than 4, we split the presentation into Sections~\ref{sec:d>4a} and \ref{sec:d>4b}; Section~\ref{sec:d>4a} is for low rate and Section~\ref{sec:d>4b} is for higher rate. We conclude the paper with some remarks in~Section~\ref{sec:conclusion}.

\begin{table*}
\caption{Parameters of Singleton-optimal $(r,\delta)$-LRC over $GF(4)$ for $\delta>2$.}
\label{table:code_parameters}
$$
\begin{array}{|c|c|c|c|c|c|c|} \hline
 r & \delta & n & k & d & \text{Remark} & \text{Reference}\\ \hline \hline
 1 & \geq 3     & k\delta & k& \delta & k\geq 2& \eqref{LRC:r=1_d=delta} \\ \hline
 1 & \geq 3  & (k+1)\delta & k & 2\delta &k\geq 2 & \eqref{LRC:r=1_d=2delta} \\ \hline
 1 & \geq 3 & (k+2)\delta  & k & 3\delta & 2,3& \eqref{LRC:r=1_d=3delta} \\ \hline
 1 & \geq 3 & (k+3)\delta & k & 4\delta & 2,3 & \eqref{LRC:r=1_d=4delta} \\ \hline \hline
 2 & 3 & 4\ell-e & 2\ell-e & 3 & \ell\geq 2,\ e=0,1&\eqref{LRC:r=2_delta=3_d=3},\eqref{LRC:r=2_k=1_delta=3_d=3}\\ \hline
 3 & 3 & 5\ell-e & 3\ell-e & 3 & \ell\geq 2, \  e=0,1,2& \eqref{LRC:r=3_delta=3_d=3}, \eqref{LRC:r=3_k=2_delta=3_d=3}, \eqref{LRC:r=3_k=1_delta=3_d=3}\\ \hline \hline
 2 & 3 & 4\ell & 2\ell-1 & 4 & \ell\geq 2 & \eqref{LRC:r=2_k=1_delta=3_d=4} \\ \hline
 3 & 3 & 5\ell-e & 3\ell-1-e & 4 & \ell \geq 2,\ e=0,1 & \eqref{LRC:r=3_k=1_delta=3_d=4}, \eqref{LRC:r=3_k=1_delta=3_d=4b} \\ \hline
 2 & 4 & 5\ell-e&2\ell-e& 4 & \ell\geq 2,\ e=0,1 & \eqref{LRC:r=2_delta=4_d=4} \\ \hline
 3 & 4 & 6\ell-e&3\ell-e& 4 & \ell\geq 2,\ e=0,1,2 &\eqref{LRC:r=3_delta=4_d=4} \\ \hline \hline
2 & 4 & d+5 & 3 & d & 7\leq d\leq 16 &\eqref{LRC:r=2_delta=4_d>4} \\ \hline
3 & 3 & d+4 & 3 & d & 5\leq d\leq 12 &\eqref{LRC:r=2_delta=3_d>4} \\ \hline
3 & 3 & d+5 & 4 & d & 5\leq d\leq 12 &\eqref{LRC:r=3_delta=3_d>4} \\ \hline
3 & 4 & d+6 & 4 & d & 6 \leq d \leq 12 &\eqref{LRC:r=3_delta=4_d>4} \\ \hline \hline
2 & 3 & 4\ell & 2\ell-3 & 8 & \ell=3,4,5 & \eqref{LRC:r=2_delta=3_large1}\\
\hline
2 & 3 & 4\ell & 2\ell-5 & 12 & \ell=4,5 & \eqref{LRC:r=2_delta=3_large2}\\
\hline
3 & 3 & 5\ell & 3\ell-2 & 5 & \ell\geq 3& \eqref{LRC:r=3_delta=3_large1}\\ \hline
3 & 4 & 6\ell & 3\ell-2 & 6 & \ell \geq 3 & \eqref{LRC:r=3_delta=4_large1} \\ \hline
3 & 4 & 5\ell & 3\ell-5 & 12& 3\leq\ell\leq 17 & \eqref{LRC:r=3_delta=4_large2}\\ \hline
\end{array}
$$
\end{table*}

\section{Preliminaries on Optimal Locally Repairable Codes}
\label{sec:preliminaries}

\subsection{Definition of Locally Repairable Codes}

We give a formal definition of linear locally repairable codes over a general finite field $GF(q)$, where $q$ is a prime power.  A linear code $\cC$ over $GF(q)$ with length $n$, dimension $k$ and minimum distance $d$ is called an $[n,k,d]_q$ code, or $[n,k,d]$ code if the alphabet size $q$ is understood from the context. The dimension and minimum distance of $\cC$ are denoted by $\dim(\cC)$ and $d_{\mathrm{min}}(\cC)$, respectively. 
We label the $n$ components of a codeword in $\cC$ by $[n]:=\{1,2,\ldots, n\}$. Given a subset $\cX$ of $[n]$, we let $\cC_\cX$ denote the {\em punctured subcode}
$$
 \cC_\cX := \{ (u_j)_{j\in \cX} \in GF(q)^{|\cX|} :\, (u_1,\ldots, u_n)\in \cC \},
$$
which is also called the {\em restriction} of $\cC$ at $\cX$. We will assume that $\cC_{\mathcal{S}} \neq \{\mathbf{0}\}$ for any subset $\mathcal{S}$ of $\{1,2,\ldots, n\}$, i.e., no component is identically zero.

For $r\geq 1$ and $\delta \geq 2$, we say that $\cC$ has {\em locality} $r$ with {\em tolerance} $\delta$ if $\cC$ there is a collection $\mathscr{R}$ of subsets of $[n]$ such that

(a) $[n] = \bigcup_{\cR \in \mathscr{R}} \cR$,

(b) $|\cR| \leq r+ \delta-1$ for all $\cR\in\mathscr{R}$,

(c) the minimum distance of $\cC_{\cR}$ is at least~$\delta$ for all $\cR\in\mathscr{R}$.

Each $\cR\in\mathscr{R}$ is called a {\em repair group}, and the punctured subcode $\cC_{\cR}$ is called a {\em local code}. By a slight abuse of notation, we will also call the code symbols indexed by some $\cR\in\mathscr{R}$ a repair group. Condition (c) forces the size of a repair group to be at least $\delta$. Conditions (b) and (c) guarantee that a code symbol in a repair group $\cR$ can be recovered by accessing any $|\cR|-(\delta-1) \leq r$ other code symbols in the repair group $\cR$. Since the repair groups cover all the coded symbols by assumption (a), we can recover any $\delta$ erasures locally. A code $\cC$ that satisfies the above requirements is denoted by $(r,\delta)$-LRC. We note that an $(r,\delta)$-LRC is also an $(r',\delta')$-LRC for all $r'\geq r$ and $2\leq \delta'\leq \delta$.

For $r\geq 1$ and $\delta \geq 2$, the minimal distance of an $(r,\delta)$-LRC with length $n$ and dimension $k$ satisfies a Singleton-type bound \cite{PKLK12},
\begin{equation}
d \leq n -k + 1 - \Big( \Big\lceil \frac{k}{r} \Big\rceil -1 \Big)(\delta - 1).
\label{eq:Singleton_like}
\end{equation}
An $(r,\delta)$-LRC achieving this bound with equality is called {\em Singleton-optimal}, or {\em optimal} for short. 

When $k\leq r$, the bound in \eqref{eq:Singleton_like} reduces to the classical Singleton bound, and an optimal LRC attaining~\eqref{eq:Singleton_like} is an MDS code. We will assume $r<k$ in the rest of this paper.

We recall how to define locality in terms of the {\em dual code}
$$
 \cC^\perp := \{  \bu \in GF(q)^n:\, \bu \cdot \bv = 0 \text{ for all } \bv \in \cC \}.
$$
We denote the {\em support} of a linear code $\cD$ of length $n$ by
$$
\supp(\cD) := \{ i\in [n]:\,   \text{there exists } \bu \in \cD \text{ s.t. }u_i \neq 0\}.
$$
A linear code $\cC$ has locality $r$ with tolerance $\delta$ if and only if  we can find a collection $\mathscr{D}$ of subcodes in $\cC^\perp$ such that

(a') $[n] = \bigcup_{\cD\in\mathscr{D}} \supp(\cD)$,

(b') $| \supp(\cD) | \leq r+\delta - 1$ for all $\cD\in\mathscr{D}$,

(c') For any $\cD\in\mathscr{D}$ and for any subset $\cX$ of $\supp(\cD)$ of size less than or equal to $\delta-1$, the dimension of $(\cD)_\cX$ is equal to $|\cX|$. 

Because we assume that no code symbol is identically zero, a subcode $\cD$ of $\cC^\perp$ satisfying Condition (c') must have support size larger than or equal to $\delta$. Condition (c') ensures that $\cD$ has dimension at least $\delta-1$. 
It is not difficult to verify that the support of a subcode $\cD$ that satisfies Conditions (b') and (c') is a repair group of~$\cC$.
  A generator matrix of a subcode $\cD \subset \cC^\perp$ satisfying Conditions (b') and (c') is called a {\em local parity-check matrix}. 

We remark that any $\cD$ in $\mathscr{D}$ is a subcode of $\cC^\perp$ with length $n$, and the punctured code $(\cD)_{\supp(\cD)}$ has length at most $r+\delta - 1$.

\begin{example} Consider the $[9,5,3]$ code over $GF(4)$ defined by the following parity-check matrix
$$
H=\left[
\begin{array}{ccccccccc}
        1&1&1&1&0&0&0&0&0 \\
        0&1&\alpha&\alpha^2&1&0&0&0&0\\
        0&0&0&0&0&1&1&1&1\\
        0&0&0&0&1&1&\alpha&\alpha^2&0
	\end{array} \right]
$$
where $\alpha$ is a primitive cube root of unity in $GF(4)$. The first two rows of $H$ form a local parity-check matrix, and the third and fourth rows form another local parity-check matrix. The supports of the two repair groups are $\{1,2,3,4,5\}$ and $\{5,6,7,8,9\}$, respectively. The two repair groups cover all the code symbols and overlap at the 5th component. This parity-check matrix defines an optimal $(3,3)$-LRC with minimum distance~3.
\label{ex:1}
\end{example}

\begin{example} The parity-check matrix in Fig.~\ref{fig:18_4_12} defines an optimal $(3,4)$-LRC with minimum distance~12. The symbols $\alpha$ and $\beta$ represent the roots of $x^2+x+1$ in $GF(4)$. There are three repair groups in this LRC and each of them has size~6. The six rows at the bottom of $H$ correspond to global parity-check equations.
\label{ex:2}
\end{example}

\begin{figure}
$$H={\small \left[
\begin{array}{cccccc|cccccc|cccccc}
1&0&0&1&1&1&          0&0&0&0&0&0& 0&0&0&0&0&0 \\
0&1&0&1&\alpha&\beta& 0&0&0&0&0&0& 0&0&0&0&0&0 \\
0&0&1&1&\beta&\alpha& 0&0&0&0&0&0& 0&0&0&0&0&0 \\ \hline 
0&0&0&0&0&0&          1&0&0&1&1&1& 0&0&0&0&0&0  \\
0&0&0&0&0&0&          0&1&0&1&\alpha&\beta& 0&0&0&0&0&0\\
0&0&0&0&0&0&          0&0&1&1&\beta&\alpha& 0&0&0&0&0&0 \\ \hline
0&0&0&0&0&0&          0&0&0&0&0&0& 1&0&0&1&1&1 \\
0&0&0&0&0&0&          0&0&0&0&0&0&0&1&0&1&\alpha&\beta \\
0&0&0&0&0&0&          0&0&0&0&0&0&0&0&1&1&\beta&\alpha \\ \hline
1&1&\alpha&\beta&0&\beta& 1&\alpha&\alpha&\beta&0&\beta&1&\beta&\alpha&1&0&\beta \\
\beta&\alpha&\alpha&\alpha&\alpha&\alpha& 1&0&1&\beta&\alpha&0&  \alpha&1&1&\beta&\beta&1 \\
0&1&1&\beta&\beta&1& \alpha&\beta&\alpha&1&\beta&1&  \alpha&\beta&\alpha&\alpha&1&b \\
1&\beta&\beta&\beta&\alpha&0& \beta&\alpha&1&\alpha&0&\beta&  \alpha&1&1&\beta&1&\alpha \\
1&\beta&\beta&\beta&\beta&1& \beta&\alpha&\alpha&1&1&\beta&  \beta&\alpha&1&\alpha&\alpha&\alpha 	
\end{array}
\right]}
$$
\caption{Parity-check matrix of a $(3,4)$-LRC with length $n=18$, dimension $k=4$ and minimum distance $d=12$.}
\label{fig:18_4_12}
\end{figure}

\subsection{Chain of Subcodes and Residue Codes}

\noindent {\bf Definition.} Let $\mathscr{D}$ denote the collection of all subcodes of $\cC^\perp$ that satisfies Conditions (b') and (c'). A sequence of subcodes $\cD_1$, $\cD_2,\ldots, \cD_m$ in $\mathscr{D}$ is said to form a {\em chain} if 
$$
\dim(\cD_1\oplus\cD_2\oplus\cdots\oplus\cD_i) - 
\dim(\cD_1\oplus\cD_2\oplus\cdots\oplus\cD_{i-1}) \geq \delta - 1
$$
for $i=2,3,\ldots, m$. The number of subcodes in a chain is called the {\em length} of the chain. A chain $\cD_1$, $\cD_2,\ldots, \cD_M$ is said to be {\em maximal} if 
$$
\dim(\cD_1\oplus\cdots\oplus\cD_M \oplus \cD') - 
\dim(\cD_1\oplus\cdots\oplus\cD_M) < \delta - 1
$$
for any $\cD' \in\mathscr{D}$. (The notation $\mathcal{D} \oplus \mathcal{D}'$ stands for the sum space of subspaces $\mathcal{D}$ and $\mathcal{D}'$.)

A lower bound on the length of a maximal chain is derived in~\cite{HaoICC2017, Hao2019}.

\begin{prop} Given an $(r,\delta)$-LRC $\cC$ with dimension $k$, a maximal chain of subcodes in $\cC^\perp$ contains at least $\lceil k/r \rceil$ subcodes.
\label{prop:chain}
\end{prop}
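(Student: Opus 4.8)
The plan is to bound the length $M$ of a maximal chain from below by comparing the subcodes $\cD_1,\ldots,\cD_M$ against a suitably punctured version of $\cC$. Set $V:=\cD_1\oplus\cdots\oplus\cD_M\subseteq\cC^\perp$ and let $\mathcal{U}:=\bigcup_{i=1}^{M}\supp(\cD_i)$ be the union of their supports. Two easy estimates come first: the chain condition, together with the fact recorded above that every member of $\mathscr{D}$ has dimension at least $\delta-1$, telescopes to $\dim(V)\geq M(\delta-1)$, while Condition (b') gives $|\mathcal{U}|\leq M(r+\delta-1)$. Since every codeword of $V$ is supported on $\mathcal{U}$, restriction to $\mathcal{U}$ maps $V$ injectively into the dual $(\cC_{\mathcal{U}})^\perp$ of the punctured code, so $\dim(\cC_{\mathcal{U}})=|\mathcal{U}|-\dim((\cC_{\mathcal{U}})^\perp)\leq |\mathcal{U}|-\dim(V)\leq Mr$.

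The heart of the proof is to show $\dim(\cC_{\mathcal{U}})=k$; together with the chain of inequalities above this gives $k\leq Mr$, hence $M\geq\lceil k/r\rceil$. Equivalently, I must show that $\cC$ has no nonzero codeword supported in the complement $[n]\setminus\mathcal{U}$. I expect this to be the main obstacle, since maximality by itself does not force $\mathcal{U}=[n]$: a repair group is allowed to stick out of $\mathcal{U}$ a little, and I must show both that it cannot stick out by more than $\delta-2$ coordinates and that even this small protrusion is incompatible with a codeword living entirely outside $\mathcal{U}$.

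To quantify the protrusion I would read the maximality condition backwards. Suppose some $\cD'\in\mathscr{D}$ had at least $\delta-1$ support coordinates outside $\mathcal{U}$, and pick such a set $\cX\subseteq\supp(\cD')\setminus\mathcal{U}$ with $|\cX|=\delta-1$. Every vector of $\cD'\cap V$ is supported on $\mathcal{U}$ and hence vanishes on $\cX$, so $\cD'\cap V$ lies in the kernel of the projection $\cD'\to GF(q)^{\cX}$, which by Condition (c') has codimension exactly $|\cX|=\delta-1$ in $\cD'$. Then $\dim(V\oplus\cD')-\dim(V)=\dim(\cD')-\dim(\cD'\cap V)\geq\delta-1$, contradicting maximality. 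Therefore $|\supp(\cD')\setminus\mathcal{U}|\leq\delta-2$ for every $\cD'\in\mathscr{D}$.

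Finally, suppose toward a contradiction that $\bc\in\cC$ is nonzero with $\supp(\bc)\subseteq[n]\setminus\mathcal{U}$, fix $j\in\supp(\bc)$, and use Condition (a') to choose $\cD'\in\mathscr{D}$ with $j\in\supp(\cD')$. Put $\cX:=\supp(\bc)\cap\supp(\cD')$; this set contains $j$ and, by the previous step, has size at most $\delta-2$. Because $\bc$ is orthogonal to every element of $\cD'\subseteq\cC^\perp$ and $\bc$ vanishes off $\cX$ within $\supp(\cD')$, we get $\bd_{\cX}\cdot\bc_{\cX}=0$ for all $\bd\in\cD'$; but Condition (c') makes the projection of $\cD'$ onto $\cX$ all of $GF(q)^{\cX}$, so some $\bd\in\cD'$ satisfies $\bd_{\cX}\cdot\bc_{\cX}\neq 0$ as $\bc_{\cX}\neq0$ — a contradiction. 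Hence $\dim(\cC_{\mathcal{U}})=k$, and the bound follows. The only delicate point is the interplay in these last two steps between maximality, which caps the protrusion at $\delta-2$, and the local MDS-type property (c'), which forbids the resulting short dual codeword; the remainder is bookkeeping with dimensions.
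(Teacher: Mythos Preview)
Your proof is correct and follows essentially the same route as the paper's: both hinge on the observation that maximality together with Condition~(c') forces $|\supp(\cD')\setminus\mathcal{U}|\leq\delta-2$ for every $\cD'\in\mathscr{D}$, and both then exploit (c') on this small protrusion to squeeze out the inequality $k\leq Mr$. The only difference is cosmetic: the paper stays in $\cC^\perp$, building for each coordinate $i\notin\mathcal{U}$ an explicit dual vector supported on $\mathcal{U}\cup\{i\}$ and counting dimensions there, whereas you phrase the equivalent fact on the primal side as ``no nonzero codeword of $\cC$ is supported off $\mathcal{U}$'' and count via $\dim(\cC_{\mathcal{U}})$.
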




The next definition assumes that $\cC$ is an optimal LRC satisfying the Singleton-like bound~\eqref{eq:Singleton_like} with equality. 

\noindent {\bf Definition.} Suppose $\cC$ is a Singleton-optimal $(r,\delta)$-LRC. Let $\cD_1,\ldots, \cD_m$ be a chain with size $m=\lfloor (k-1)/r \rfloor$ (such a chain exists by Prop.~\ref{prop:chain}). The shortened code of $\cC$ consisting of codewords that are zero in the support of $\cD_1\oplus\cdots\oplus\cD_m$ is called a {\em residue code} of~$\cC$.

The following result is proved in~\cite[Theorem 1]{Hao2019}.

\begin{prop}
Suppose $\mathcal{C}$ is a Singleton-optimal $(r,\delta)$-LRC with dimension $k$. \begin{enumerate}
\item[(i)] A residue code is an MDS code with minimum distance $n-k-\lfloor(k-1)/r \rfloor(\delta-1)+1$ and co-dimension $n-k-\lfloor(k-1)/r \rfloor(\delta-1)$.

\item[(ii)] Any subcode in $\cC^\perp$ that satisfies Conditions (b') and (c'), i.e., a subcode in $\cC^\perp$ that corresponds to a repair group of $\cC$, has dimension $\delta-1$.

\item[(iii)] The restriction of $\cC$ to any repair group is an MDS code with minimum distance exactly $\delta-1$.
\end{enumerate}
\label{prop:residue_code}
\end{prop}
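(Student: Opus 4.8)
My plan is to derive all three parts from a single ``tightness engine'' obtained by comparing the Singleton-type bound \eqref{eq:Singleton_like} for $\cC$ with the ordinary Singleton bound for its residue code. Write $m=\lfloor(k-1)/r\rfloor$ (note $m=\lceil k/r\rceil-1$), fix a length-$m$ chain $\cD_1,\ldots,\cD_m$, set $S=\supp(\cD_1\oplus\cdots\oplus\cD_m)$, and let $\cR_{\mathrm{res}}$ be the residue code, i.e.\ the shortening of $\cC$ at $S$. Shortening deletes only coordinates on which the retained codewords already vanish, so $d_{\mathrm{min}}(\cR_{\mathrm{res}})\geq d$; moreover $\cR_{\mathrm{res}}$ has length $n-|S|$ and dimension $k-\dim(\cC_S)$, where $\cC_S$ is the restriction of $\cC$ to $S$. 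The ordinary Singleton bound then gives $d\leq d_{\mathrm{min}}(\cR_{\mathrm{res}})\leq (n-|S|)-(k-\dim\cC_S)+1$. Invoking the puncturing/shortening duality $\dim\cC_S=|S|-\dim(\cC^\perp\!\downarrow\!S)$, where $\cC^\perp\!\downarrow\!S$ is the subcode of $\cC^\perp$ supported in $S$ (read on $S$), I get $|S|-\dim\cC_S=\dim(\cC^\perp\!\downarrow\!S)\geq\dim(\cD_1\oplus\cdots\oplus\cD_m)\geq m(\delta-1)$. Feeding in the optimality hypothesis $d=n-k+1-m(\delta-1)$ squeezes $n-k+1-m(\delta-1)=d\leq n-k+1-(|S|-\dim\cC_S)\leq n-k+1-m(\delta-1)$, so every inequality is an equality: $\cR_{\mathrm{res}}$ attains the Singleton bound with minimum distance $n-k-m(\delta-1)+1$ and co-dimension $n-k-m(\delta-1)$, which is (i). As by-products I record the two facts reused below: $\dim(\cD_1\oplus\cdots\oplus\cD_m)=m(\delta-1)$, and $\cC^\perp\!\downarrow\!S=\cD_1\oplus\cdots\oplus\cD_m$.

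For (ii) the point is that nothing in the engine used the particular chain, so the identity $\dim(\cD_1\oplus\cdots\oplus\cD_m)=m(\delta-1)$ holds for \emph{every} length-$m$ chain. Since each successive increment of a chain is at least $\delta-1$ and the first term satisfies $\dim\cD_1\geq\delta-1$ by Condition (c'), the $m$ increments are each $\geq\delta-1$ and sum to $m(\delta-1)$; hence each increment, and in particular $\dim\cD_1$, equals $\delta-1$. Finally, given an arbitrary repair-group subcode $\cD\in\mathscr{D}$, I would extend it greedily to a maximal chain having $\cD$ as its first term; by Proposition~\ref{prop:chain} this chain has at least $\lceil k/r\rceil=m+1$ terms, so truncating to its first $m$ terms yields a length-$m$ chain beginning with $\cD$, and the previous remark forces $\dim\cD=\delta-1$.

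For (iii), let $\cR=\supp(\cD)$ with $\dim\cD=\delta-1$ by (ii). Condition (c') says every $\delta-1$ columns of a generator matrix $M$ of $(\cD)_\cR$ (a $(\delta-1)\times|\cR|$ matrix) are linearly independent, so $M$ is the parity-check matrix of an $[\,|\cR|,\,|\cR|-\delta+1,\,\delta\,]$ MDS code $A$, and $(\cD)_\cR\subseteq(\cC_\cR)^\perp$ yields $\cC_\cR\subseteq A$. To conclude $\cC_\cR=A$ I must exclude extra dual constraints, i.e.\ any $w\in\cC^\perp$ with $\supp(w)\subseteq\cR$ and $w\notin\cD$. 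The key observation is that adjoining such a $w$ as a new row to $M$ preserves Condition (c'): any $\delta-1$ columns of the enlarged matrix contain a $(\delta-1)\times(\delta-1)$ invertible block coming from $M$, hence remain independent. Thus $\cD+\langle w\rangle$ would be a subcode of $\cC^\perp$ of dimension $\delta$ still satisfying (b') and (c'), contradicting (ii). Therefore $\cC_\cR=A$ is MDS, and its minimum distance is the local Singleton value $|\cR|-(|\cR|-\delta+1)+1=\delta$ (equal to $\delta$, so Condition (c) holds with equality; the exponent $\delta-1$ in the displayed statement should read $\delta$).

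The delicate point is the engine in the first paragraph: the entire argument rests on the two bounds collapsing to equalities \emph{simultaneously}, which demands careful bookkeeping of the puncturing/shortening duality and the observation that the squeeze is chain-independent (this is exactly what powers (ii)). I expect the overlapping-repair-group situation to look like the main obstacle, because the supports $\supp(\cD_i)$ need not be disjoint and one cannot reason coordinate-by-coordinate; however, the adjoining-row argument in (iii) sidesteps any decomposition of $\cC^\perp\!\downarrow\!S$ and handles overlaps uniformly, so the only genuinely load-bearing step remains the optimality-driven equality in part~(i).
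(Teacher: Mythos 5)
Your overall route is essentially the paper's own: the ``engine'' in your first paragraph is exactly the paper's argument run on the shortened code rather than on the punctured dual (the paper punctures $\cC^\perp$ at the chain support and then dualizes, which is the same squeeze), your part (ii) is the same greedy-extension-and-truncation step, and your part (iii) repackages the paper's application of (ii) to the full subcode of $\cC^\perp$ supported in $\cR$ as a contradiction (the adjoining-row argument), with the merit that you actually verify Condition (c') for the enlarged matrix, a point the paper leaves implicit. You are also correct that the ``$\delta-1$'' in the displayed statement of (iii) is a typo for $\delta$.

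There is, however, one genuine gap: you never verify that the residue code $\cR_{\mathrm{res}}$ is nonzero, and your sandwich is meaningless without this. Both of its inequalities presuppose nontriviality: ``$d\leq d_{\mathrm{min}}(\cR_{\mathrm{res}})$'' requires $\cR_{\mathrm{res}}$ to contain a nonzero codeword, and the Singleton bound applied to $\cR_{\mathrm{res}}$ requires positive dimension. The omission is symptomatic of a larger issue: nowhere in your engine do you invoke Condition (b'), i.e.\ $|\supp(\cD_i)|\leq r+\delta-1$, so nothing in the argument as written prevents $S$ from being so large that $k-\dim(\cC_S)=0$, in which case every step after the first collapses. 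The paper spends a dedicated paragraph on precisely this point: by Condition (b'), $|S|\leq m(r+\delta-1)$, hence $\dim(\cR_{\mathrm{res}})=k-\dim(\cC_S)=k-|S|+\dim(\cC^\perp\!\downarrow\!S)\geq k-m(r+\delta-1)+m(\delta-1)=k-mr\geq 1$, where the last inequality holds because $m=\lfloor(k-1)/r\rfloor$ forces $mr\leq k-1$; this is exactly the computation in \eqref{eq:trick}. With that one dimension count inserted before the squeeze, your proof is complete and correct.
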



Part (iii) in the above proposition is proved in \cite[Theorem 2.2]{Kamath14} under an additional condition that $r|k$. This result actually holds holds without the condition $r|k$. For the sake of completeness, we give the proofs of Prop.~\ref{prop:chain} and Prop.~\ref{prop:residue_code} in Appendix~\ref{app:A}.

We note that there are more than one residue codes; two different chains can give rise to two residue codes with different supports. Two residue codes may have different support length. For example, in Example~\ref{ex:1}, the residue code of the LRC is a $[4,2,3]$ codes over $GF(4)$ defined by the parity-check matrix
$$
\begin{bmatrix}
1 & 1 & 1 & 1 \\
1& \alpha&\alpha^2& 0
\end{bmatrix}.
$$
In Example~\ref{ex:2}, the residue code is a $[12,1,12]$ repetition code.

The next proposition investigates the possible dimension of a residue code.

\begin{prop}
Suppose $\cC$ is an optimal $(r,\delta)$-LRC with dimension $k$. We pick a maximal chain of subcodes $\cD_1,\ldots, \cD_M$ in $\cC^\perp$ and construct the associated residue code. The dimension $k'$ of the residue codes satisfies 
\begin{equation}
k' \geq k - \big\lfloor \frac{k-1}{r} \big\rfloor r.
\label{eq:range_k_r}
\end{equation}
Moreover, if equality holds, the supports of the first $\lfloor (k-1)/r \rfloor$ subcodes in a chain are mutually disjoint.
\label{prop:range_k_r}
\end{prop}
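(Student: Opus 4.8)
The plan is to translate the statement about the dimension $k'$ of the residue code into a bound on the size of the support $\cS := \supp(\cD_1\oplus\cdots\oplus\cD_m)$ that gets shortened out, where I abbreviate $m := \lfloor (k-1)/r\rfloor$. Since the first $m$ members $\cD_1,\ldots,\cD_m$ of the maximal chain themselves form a chain of size $m$, the residue code they determine is exactly the object described in Prop.~\ref{prop:residue_code}, so part~(i) applies verbatim. The first step is to pin down $k'$ by two independent counts. The residue code is a shortened code of $\cC$ living on the coordinate set $[n]\setminus\cS$, so its length is $n-|\cS|$; and Prop.~\ref{prop:residue_code}(i) states that its co-dimension equals $n-k-m(\delta-1)$. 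Subtracting co-dimension from length gives
$$ k' = (n-|\cS|) - \big(n-k-m(\delta-1)\big) = k + m(\delta-1) - |\cS|, $$
which shows that the whole proposition reduces to an upper bound on $|\cS|$.

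The second step is to bound $|\cS|$ using only the support sizes of the individual repair groups. I would first record the elementary identity $\supp(\cD_1\oplus\cdots\oplus\cD_m) = \bigcup_{i=1}^m \supp(\cD_i)$, valid for the sum of any collection of subcodes (a coordinate lies in the support of the sum space iff it lies in the support of some summand). Feeding this into the union bound and invoking Condition~(b'), namely $|\supp(\cD_i)|\le r+\delta-1$ for every $i$, yields
$$ |\cS| = \Big|\bigcup_{i=1}^m \supp(\cD_i)\Big| \le \sum_{i=1}^m |\supp(\cD_i)| \le m(r+\delta-1). $$
Substituting this into the formula for $k'$ from the first step gives $k' \ge k + m(\delta-1) - m(r+\delta-1) = k-mr$, which is precisely the bound~\eqref{eq:range_k_r}.

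For the equality clause I would trace which inequalities must become tight. Equality $k'=k-mr$ forces $|\cS| = m(r+\delta-1)$, so both displayed inequalities in the second step must hold with equality simultaneously. Tightness of the union bound $\big|\bigcup_{i}\supp(\cD_i)\big| = \sum_i|\supp(\cD_i)|$ is exactly the assertion that the supports $\supp(\cD_1),\ldots,\supp(\cD_m)$ are pairwise disjoint, which is the claimed conclusion (and tightness of the per-group bound additionally forces each $|\supp(\cD_i)|=r+\delta-1$, a stronger fact I need not invoke). I expect no serious obstacle here; the only point requiring care is the bookkeeping in the first step, where one must use the \emph{shortened} length $n-|\cS|$ rather than $n$ before combining it with the co-dimension supplied by Prop.~\ref{prop:residue_code}(i). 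Once that identity is set up correctly, the remainder is a one-line union-bound argument.
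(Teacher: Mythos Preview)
Your argument is correct and is essentially the same as the paper's own proof: both compute $k'$ from the length $n-|\cS|$ and the co-dimension supplied by Prop.~\ref{prop:residue_code}(i), then bound $|\cS|$ by $m(r+\delta-1)$ via the union of the repair-group supports, and finally read off disjointness from tightness of the union bound. The only cosmetic difference is that the paper writes $n-n'$ where you write $|\cS|$.
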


\begin{proof}
 Suppose that the residue code of is an $[n',k']$ code. From Prop.~\ref{prop:residue_code}, we know that the co-dimension equals
$$
n'-k' = d-1 = n-k-\big\lfloor \frac{k-1}{r} \big\rfloor(\delta-1).
$$
After some re-arrangements, this becomes
$$
n-n' = k + \big\lfloor \frac{k-1}{r} \big\rfloor(\delta-1)-k'.
$$
Since $n-n'$ is the size of the union of the repair groups associated with the  first $\lfloor (k-1)/r \rfloor$ subcodes in the chain, we obtain
\begin{equation}
 k + \big\lfloor \frac{k-1}{r} \big\rfloor(\delta-1)-k' \leq 
\big\lfloor \frac{k-1}{r} \big\rfloor(r+\delta-1)
\label{eq:range_k_r_proof}
\end{equation}
which is equivalent to the inequality in the proposition.

If equality holds in \eqref{eq:range_k_r_proof} holds, then the support of $\cD_1,\ldots, \cD_{\lfloor (k-1)/r \rfloor}$ must be mutually disjoint.
\end{proof}


\smallskip

Prop.~\ref{prop:chain}  suggests that we can write down a parity-check matrix of an $(r,\delta)$-LRC in a specific form. Given an $(r,\delta)$-LRC $\cC$, we first pick a maximal chain of subcodes in $\cC^\perp$. We form a parity-check matrix by stacking the local parity-check matrices associated to the subcodes in the maximal chain. If the number of rows in this matrix is strictly less than $n-k$, we append additional rows to this matrix to form a parity-check matrix of $\cC$. The structure of the resulting parity-check matrix is summarized as follows.

\begin{corollary}
For any Singleton-optimal $(r,\delta)$-LRC with length $n$ and dimension $k$, there is a parity-check matrix with the following structure:
\begin{itemize}
\item The rows are divided into an upper part and a lower part;
\item the rows in the upper part of the matrix are partitioned into groups, with each group corresponding to a repair group;
\item there are at least $\lceil k/r \rceil$ groups, and each of these group occupies exactly $\delta-1$ rows;
\item any $\delta-1$ columns chosen within the support of a repair group are linearly independent;
\item the lower part of the matrix, which may or may not be empty, corresponds to global parity-check equations of the code.
\end{itemize}
\label{cor:H}
\end{corollary}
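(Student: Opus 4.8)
The plan is to make precise the informal construction described in the paragraph immediately preceding the corollary, and then verify each of the five bulleted properties in turn. First I would invoke Proposition~\ref{prop:chain} to select a maximal chain $\cD_1,\cD_2,\ldots,\cD_M$ of subcodes in $\cC^\perp$ satisfying Conditions (b') and (c'), noting that $M\geq\lceil k/r\rceil$. By Proposition~\ref{prop:residue_code}(ii), each $\cD_i$ has dimension exactly $\delta-1$, so I may choose a generator matrix $G_i$ for $\cD_i$ consisting of $\delta-1$ rows. These matrices $G_i$ are the local parity-check matrices that will form the row groups of the upper part of the desired matrix.

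The key structural observation is that stacking these $G_i$ yields a full-rank block. Because the chain condition forces each partial sum to grow by at least $\delta-1$, while adjoining a $(\delta-1)$-dimensional space can raise the dimension by at most $\delta-1$, every increment is exactly $\delta-1$. Hence the sum $\cD_1\oplus\cdots\oplus\cD_M$ is direct, of dimension $M(\delta-1)$, and the stacked matrix has $M(\delta-1)$ linearly independent rows lying in $\cC^\perp$, partitioned into $M\geq\lceil k/r\rceil$ groups of $\delta-1$ rows each. This settles the first three bullets at once.

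For the fourth bullet I would unfold Condition (c'): for the generator matrix $G_i$ of $\cD_i$ and any subset $\cX\subseteq\supp(\cD_i)$ with $|\cX|\leq\delta-1$, the restriction $(\cD_i)_\cX$ has dimension $|\cX|$. Since $(\cD_i)_\cX$ is the row space of the submatrix of $G_i$ formed by the columns indexed by $\cX$, this is exactly the assertion that those $|\cX|$ columns are linearly independent. Thus any $\delta-1$ columns chosen within a single repair group are linearly independent, as claimed.

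Finally, to complete the matrix I would compare $M(\delta-1)$ with $n-k=\dim(\cC^\perp)$. The direct sum sits inside $\cC^\perp$, so $M(\delta-1)\leq n-k$; if this inequality is strict I would extend the $M(\delta-1)$ independent rows to a basis of $\cC^\perp$ by appending the remaining $n-k-M(\delta-1)$ rows, which constitute the lower part of global parity-check equations. The resulting $n-k$ rows are independent and lie in $\cC^\perp$, so the matrix is a genuine parity-check matrix of $\cC$. I do not expect a serious obstacle, as the substantive work is already carried by Propositions~\ref{prop:chain} and~\ref{prop:residue_code}; the only points needing care are the direct-sum argument that pins the upper block to full rank and the reading of Condition (c') as a statement about column independence.
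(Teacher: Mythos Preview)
Your proposal is correct and follows exactly the approach the paper sketches in the paragraph preceding the corollary: pick a maximal chain via Proposition~\ref{prop:chain}, stack generator matrices of the $\cD_i$'s (each of dimension $\delta-1$ by Proposition~\ref{prop:residue_code}(ii)), and append rows if needed to reach rank $n-k$. The paper treats the corollary as a summary of that informal construction and gives no separate proof; your write-up simply fills in the details, including the direct-sum argument and the reading of Condition~(c') as column independence, both of which are correct.
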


An example of parity-check matrix that is in the form described by the above corollary can be found in Fig.~\ref{fig:18_4_12}. In the matrix in Fig.~\ref{fig:18_4_12}, the first 9 rows are divided into three groups, and each group corresponds to a local parity-check matrix. The last five rows are global parity-check equations.

In general, we can put the parity-check matrix in the form
\begin{equation}
H = \left[
\begin{array}{c|c}
A & 0 \\ \hline
B & C
\end{array} \right],
\label{eq:parity_check_matrix}
\end{equation}
where $A$ is a submatrix containing $\lfloor (k-1)/r \rfloor(\delta-1)$ rows, and $C$ is a parity-check matrix of the residue code.


\medskip

We will use some result for almost MDS code  \cite{Boer96} \cite{DL95}, whose definition is reviewed below. Given a $q$-ary $[n,k,d]$ linear code~$\cD$, we define the {\em Singleton defect}, or simply {\em defect}, of $\cD$ by
$$
s(\cD) := n-k+1-d.
$$
The defect of $\cD$ is a non-negative integer, and is equal to zero if and only if $\cD$ is MDS. From the Singleton-type bound~\ref{eq:Singleton_like}, we see that the Singleton defect of a $k$-dimensional $(r,\delta)$-LRC must be larger than or equal to $\lfloor (k-1)/r \rfloor(\delta-1)$.

We record a useful bound on the minimum distance in the next theorem.

\begin{theorem}[\cite{FW97}]
The minimum distance of a $q$-ary linear code with dimension $k\geq 2$ and defect $s$ is less than or equal to $q(s+1)$.
\label{thm:defect}
\end{theorem}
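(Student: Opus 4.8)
The plan is to bound $d$ by passing to a \emph{residual code} relative to a minimum-weight codeword and then applying the ordinary Singleton bound to that shorter code. Write the code as an $[n,k,d]_q$ code $\cD$ with defect $s=n-k+1-d$, and note that the target inequality $d\le q(s+1)$ is equivalent to $\lceil d/q\rceil \le s+1 = n-k+2-d$. Fix a codeword $\bc\in\cD$ of weight $d$, let $S=\supp(\bc)$ (so $|S|=d$), and define the residual code $\cR$ to be the restriction of $\cD$ to the $n-d$ coordinates in $[n]\setminus S$. I would then establish two facts about $\cR$: that $\dim(\cR)=k-1$, and that $d_{\mathrm{min}}(\cR)\ge \lceil d/q\rceil$.

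For the dimension, I would examine the kernel of the restriction map $\cD\to\cR$, i.e.\ the subcode of codewords supported inside $S$. Any nonzero such word has weight at most $|S|=d$ but at least $d$ (the minimum distance), hence weight exactly $d$ and support exactly $S$; restricting this subcode to $S$ therefore yields a length-$d$ code all of whose nonzero words have full weight $d$, which is an MDS $[d,k_0,d]$ code and so forces $k_0=1$. Thus the kernel is spanned by $\bc$ alone and $\dim(\cR)=k-1$; the hypothesis $k\ge 2$ guarantees $\cR$ is a nonzero code so that Singleton applies. For the distance, I would use the averaging trick over the $q$ scalar shifts of $\bc$: for a codeword $\bc'\in\cD$ whose restriction to $[n]\setminus S$ is nonzero, every shift $\bc'-\lambda\bc$ with $\lambda\in GF(q)$ is a nonzero codeword, hence of weight at least $d$, while on each coordinate $i\in S$ (where $c_i\neq 0$) exactly one value of $\lambda$ kills the $i$th entry. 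Summing weights over all $\lambda$ gives
$$
q\,d \;\le\; \sum_{\lambda\in GF(q)}\mathrm{wt}(\bc'-\lambda\bc) \;=\; q\,\mathrm{wt}(\bc'|_{[n]\setminus S}) + d(q-1),
$$
so that $\mathrm{wt}(\bc'|_{[n]\setminus S})\ge d/q$, and integrality upgrades this to $\lceil d/q\rceil$; since this holds for every codeword mapping to a nonzero word of $\cR$, we get $d_{\mathrm{min}}(\cR)\ge\lceil d/q\rceil$.

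Finally I would apply the classical Singleton bound to the $[\,n-d,\,k-1,\,d_{\mathrm{min}}(\cR)\,]$ code $\cR$, obtaining $d_{\mathrm{min}}(\cR)\le (n-d)-(k-1)+1 = n-k-d+2 = s+1$. Combining with the distance estimate yields $\lceil d/q\rceil\le s+1$, whence $d/q\le s+1$ and therefore $d\le q(s+1)$, as claimed. I expect the main obstacle to be the residual-distance lemma: one must handle the averaging carefully, verify that every shift $\bc'-\lambda\bc$ is genuinely nonzero (which is where nonvanishing of the restriction to $[n]\setminus S$ is used), and justify the ceiling from integrality; the dimension computation and the concluding Singleton step are then routine bookkeeping against the defect relation $s=n-k+1-d$.
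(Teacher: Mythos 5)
Your proof is correct. The paper gives no proof of Theorem~\ref{thm:defect} at all---it simply cites \cite{FW97}---and your argument is precisely the standard residual-code proof used there: pass to the residual code of $\cD$ with respect to a minimum-weight codeword $\bc$, show it is an $[n-d,\,k-1]$ code (via the kernel computation, valid since $k\geq 2$) with minimum distance at least $\lceil d/q\rceil$ (via the averaging over the $q$ scalar shifts $\bc'-\lambda\bc$, each of which is nonzero because its restriction outside $\supp(\bc)$ is), and then apply the classical Singleton bound to get $\lceil d/q\rceil \leq (n-d)-(k-1)+1 = s+1$, which is equivalent to $d\leq q(s+1)$.
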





Using Theorem~\ref{thm:defect}, we can prove the following theorem.

\begin{theorem} If $\cC$ is a $q$-ary $k$-dimensional Singleton-optimal $(r,\delta)$-LRC with minimum distance $d$, then
$$
 d \leq \begin{cases}
q & \text{ if } r \nmid (k-1), \\
\delta q & \text{ if } r | (k-1).
\end{cases}
$$
\label{thm:d}
\end{theorem}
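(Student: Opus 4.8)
The plan is to reduce everything to the classical defect bound of Theorem~\ref{thm:defect}, which applies only to codes of dimension at least $2$. The key preliminary observation is that, by the identity $\lceil k/r\rceil-1=\lfloor(k-1)/r\rfloor$ together with Singleton-optimality \eqref{eq:Singleton_like}, the residue code attached to a maximal chain has minimum distance exactly $d$: Prop.~\ref{prop:residue_code}(i) gives its minimum distance as $n-k-\lfloor(k-1)/r\rfloor(\delta-1)+1$, which is precisely $d$. Since that residue code is furthermore MDS, i.e. has Singleton defect $0$, feeding it into Theorem~\ref{thm:defect} yields $d\le q$ as soon as its dimension is at least $2$. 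Thus the whole problem reduces to controlling the dimension $k'$ of the residue code, and the dichotomy in the statement mirrors exactly the dichotomy in the lower bound $k'\ge k-\lfloor(k-1)/r\rfloor r$ of Prop.~\ref{prop:range_k_r}.

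I would first treat the case $r\nmid(k-1)$. Writing $k-1=\lfloor(k-1)/r\rfloor\,r+t$ with $1\le t\le r-1$, the bound of Prop.~\ref{prop:range_k_r} becomes $k'\ge t+1\ge 2$. The residue code is then a $k'$-dimensional MDS code of minimum distance $d$ with $k'\ge 2$, so Theorem~\ref{thm:defect} with defect $s=0$ gives $d\le q(0+1)=q$, which is the first line of the claim.

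For $r\mid(k-1)$ the same lower bound only yields $k'\ge 1$. When $k'\ge 2$ the residue code again forces $d\le q\le\delta q$, so the essential case is $k'=1$, where the residue code degenerates to a repetition code and Theorem~\ref{thm:defect} is inapplicable to it. Here the idea is to back off by one link of the chain. Let $\cD_1,\dots,\cD_m$ be the chain defining the residue code, where $m=\lfloor(k-1)/r\rfloor$, and let $\cC^\dagger$ be obtained by shortening $\cC$ on $T:=\supp(\cD_1\oplus\cdots\oplus\cD_{m-1})$. Shortening cannot decrease minimum distance, so $d_{\mathrm{min}}(\cC^\dagger)\ge d$. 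Because equality holds in Prop.~\ref{prop:range_k_r} when $k'=1$, the supports $\cR_i:=\supp(\cD_i)$ are mutually disjoint, whence $|T|=\sum_{i=1}^{m-1}|\cR_i|$; combining this with the injection $\cC_T\hookrightarrow\prod_{i=1}^{m-1}\cC_{\cR_i}$ and the fact that each local code $\cC_{\cR_i}$ carries $\delta-1$ independent parity checks (Prop.~\ref{prop:residue_code}(ii)) together with $|\cR_i|\le r+\delta-1$, a short computation gives $\dim(\cC^\dagger)\ge k-(m-1)r=r+1\ge 2$ and Singleton defect $s(\cC^\dagger)\le\delta-1$. Applying Theorem~\ref{thm:defect} to $\cC^\dagger$ then gives $d\le d_{\mathrm{min}}(\cC^\dagger)\le q(\delta-1+1)=\delta q$, completing the second line.

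The main obstacle is the bookkeeping hidden in this last case: one must show that shrinking the chain by one link raises the dimension above $1$ while keeping the defect at or below $\delta-1$. Both facts hinge on pinning down $\dim(\cC_T)$, and this is exactly where the disjointness furnished by the equality clause of Prop.~\ref{prop:range_k_r} and the $(\delta-1)$ independent parity checks per local code are indispensable; without disjointness the estimate on $|T|-\dim(\cC_T)$ degrades and the resulting defect could exceed $\delta-1$. Everything else is a direct substitution of the optimality relation $d=n-k+1-\lfloor(k-1)/r\rfloor(\delta-1)$.
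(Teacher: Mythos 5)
Your proof is correct. Note that the paper gives no inline argument for Theorem~\ref{thm:d} --- it simply cites \cite[Theorem 1]{Hao2019} --- so your write-up is a self-contained reconstruction from the paper's own preliminary results, and it follows the natural route that citation stands in for: reduce to the defect bound of Theorem~\ref{thm:defect}, applied to the residue code itself when $r\nmid(k-1)$ (Prop.~\ref{prop:range_k_r} then forces its dimension to be at least $2$, and it is MDS of minimum distance exactly $d$ by Prop.~\ref{prop:residue_code}(i) plus optimality), and applied to the code $\cC^\dagger$ shortened on only the first $m-1$ of the $m$ chain supports when $r\mid(k-1)$. I checked the two estimates your last case hinges on and both are sound: $\dim(\cC^\dagger)=k-\dim(\cC_T)\geq k-(m-1)r=r+1\geq 2$, via the injection $\cC_T\hookrightarrow\prod_{i=1}^{m-1}\cC_{\cR_i}$ and $\dim(\cC_{\cR_i})\leq|\cR_i|-(\delta-1)\leq r$; and $s(\cC^\dagger)\leq\bigl(n-k+1-d\bigr)-\bigl(|T|-\dim(\cC_T)\bigr)\leq m(\delta-1)-(m-1)(\delta-1)=\delta-1$, using Singleton-optimality. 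One refinement worth recording: the disjointness you extract from the equality clause of Prop.~\ref{prop:range_k_r} is convenient but not actually indispensable. Even with overlapping supports, $(\cD_1\oplus\cdots\oplus\cD_{m-1})_T$ is a subspace of $(\cC_T)^\perp$ of dimension at least $(m-1)(\delta-1)$ by the chain property, which already gives $|T|-\dim(\cC_T)\geq(m-1)(\delta-1)$, and the bound $\dim(\cC_T)\leq(m-1)r$ uses only the injection and the local parity checks; so the back-off argument goes through uniformly for every value of $k'$, making the case split on $k'$ (and hence the appeal to disjointness) avoidable. As written, though, your case analysis is valid and the proof stands.
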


\begin{proof}
See \cite[Theorem 1]{Hao2019}.
\end{proof}

The next example shows that the bound $d\leq \delta q$ when $r|(k-1)$ is tight.

\begin{example}
Consider an MDS code $\cC_2$ over $GF(q)$ of dimension $k=2$ and length $q+1$. For example, we can take the extended Reed-Solomon code with generator matrix
$$
\begin{bmatrix} 0&1&1&\cdots&1 \\ 1&\beta_1&\beta_2&\cdots&\beta_q \end{bmatrix},
$$
where $\beta_1,\ldots, \beta_q$ are distinct elements in $GF(q)$. We concatenate this code with a repetition code of length $\delta>2$ as the inner code. The resulting code has length $\delta(q+1)$. For each codeword $(c_1,c_2,\ldots, c_{q+1})$ in $\cC_2$, the codeword after concatenation has the form
$$
(\underbrace{c_1,\ldots,c_1}_{\delta \text{ times}}, 
\underbrace{c_2,\ldots,c_2}_{\delta \text{ times}},
\ldots,
\underbrace{c_{q+1},\ldots,c_{q+1}}_{\delta \text{ times}}).
$$
It is obvious that the concatenated code has minimum distance $\delta q$ and dimension $k=2$. It is an optimal $(1,\delta)$-LRC, attaining the upper bound in Theorem~\ref{thm:d}.
\label{example_concatenate}
\end{example}

\section{Code Construction Using Generalized Tensor Product}
\label{sec:GTP}

In the following, we present a simplified version of generalized tensor product code that does not require extension field. 

\smallskip

{\bf Construction 1.} (GTP construction for $(r,\delta)$-LRC) Suppose that $\ell$, $\mu$, and $\nu$ are positive integers satisfying $\mu < \ell$ and $\nu\leq r$.
Let $I_{\ell}$ be the identity matrix of size $\ell\times \ell$. Given a $(\delta-1)\times (r+\delta-1)$ matrix $B_1$,  a $\mu\times \ell$ matrix $A_2$, and a $\nu \times (r+\delta-1)$ matrix $B_2$, all over $GF(q)$, define a linear code over $GF(q)$ of length $\ell (r+\delta-1)$ by the parity-check matrix
$$
H = \begin{bmatrix}
I_{\ell} \otimes B_1 \\
A_2 \otimes B_2
\end{bmatrix},
$$
where $\otimes$ denotes the tensor product operator for matrices. We denote the resulting code by $\cC(\ell, B_1, A_2, B_2)$.

The upper part of $H$ is a block diagonal matrix. Each non-zero block in the diagonal corresponds to a repair group of size $r+\delta-1$.


\begin{prop}
Suppose that $A_2$ is a parity-check matrix of an $[\ell, \ell-\mu]$ MDS code,  $B_1$ is a parity-check matrix of an $[r+\delta-1, r]$ MDS code, and
$
\begin{bmatrix} B_1 \\ B_2 \end{bmatrix}
$  
is a parity-check matrix of an $[r+\delta-1, r-\nu]$ MDS code, then the code $\cC(\ell,B_1,A_2,B_2)$ is an $(r,\delta)$-LRC code with dimension $\ell r  - \mu \nu$. The minimum distance is lower bounded by
$$
d_{\mathrm{min}}(\cC(\ell,B_1,A_2,B_2)) \geq
\begin{cases}
\delta(\mu+1)  & \text{if } r=\nu, \\
\min\{\delta(\mu+1), \delta+\nu\} & \text{if } r > \nu.
\end{cases}
$$ 
\label{prop:GTP}
\end{prop}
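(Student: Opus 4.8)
The plan is to verify the three assertions of the proposition—locality, dimension, and distance—in turn, extracting each from an MDS property of one of the three ingredient matrices. I would first establish the $(r,\delta)$-LRC structure from $B_1$: the upper block $I_\ell\otimes B_1$ is block diagonal, so its rows split into $\ell$ groups, the $i$-th being a copy of $B_1$ supported on the $i$-th coordinate block of $r+\delta-1$ consecutive positions. Let $\cD_i\subset\cC^\perp$ be the subcode spanned by the $i$-th group. Since $B_1$ is a parity-check matrix of an $[r+\delta-1,r]$ MDS code, whose minimum distance is $\delta$, any $\delta-1$ columns of $B_1$ are linearly independent; hence each $\cD_i$ has $\supp(\cD_i)$ equal to its coordinate block (size $r+\delta-1$), dimension $\delta-1$, and satisfies Conditions (b') and (c'). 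As the $\ell$ coordinate blocks cover $[n]$, Condition (a') holds as well, so the sets $\supp(\cD_i)$ are repair groups of size $r+\delta-1$ and $\cC(\ell,B_1,A_2,B_2)$ is an $(r,\delta)$-LRC.

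Next I would compute the dimension by showing that $H$ has full row rank $\ell(\delta-1)+\mu\nu$, which gives $\dim\cC=\ell(r+\delta-1)-\ell(\delta-1)-\mu\nu=\ell r-\mu\nu$. Suppose a left combination of the rows of $H$ vanishes. Reading it off on the $j$-th coordinate block produces, for every $j$, an identity $\bu_j^{\mathsf T}B_1+\bv_j^{\mathsf T}B_2=\mathbf 0$, with $\bu_j^{\mathsf T}B_1\in\rowspan(B_1)$ and $\bv_j^{\mathsf T}B_2\in\rowspan(B_2)$. The hypothesis that $\begin{bmatrix}B_1\\B_2\end{bmatrix}$ is a parity-check matrix of an $[r+\delta-1,r-\nu]$ MDS code forces this stacked matrix to have full rank $\delta-1+\nu$; in particular $B_1$ and $B_2$ each have full row rank and $\rowspan(B_1)\cap\rowspan(B_2)=\{\mathbf 0\}$, so the two summands vanish separately for every $j$. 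Then $\bu_j^{\mathsf T}B_1=\mathbf 0$ gives $\bu_j=\mathbf 0$, killing the coefficients of $I_\ell\otimes B_1$; and $\bv_j^{\mathsf T}B_2=\mathbf 0$ gives $\bv_j=\mathbf 0$ for all $j$, which is $A_2^{\mathsf T}U=0$ for the coefficient matrix $U$ of $A_2\otimes B_2$. As $A_2$ is an MDS parity-check matrix of full row rank $\mu$, the map $A_2^{\mathsf T}$ is injective, forcing $U=0$. Hence $H$ has full row rank.

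For the distance I would argue blockwise on a nonzero codeword $\bc=(\bc^{(1)},\ldots,\bc^{(\ell)})$ with $\bc^{(j)}\in GF(q)^{r+\delta-1}$. The upper constraints give $B_1\bc^{(j)}=\mathbf 0$, so each $\bc^{(j)}$ lies in the $[r+\delta-1,r]$ MDS code $\cC_1$ of minimum distance $\delta$; thus every nonzero block has weight at least $\delta$. Setting $\mathbf w^{(j)}:=B_2\bc^{(j)}$ and collecting these as the rows of an $\ell\times\nu$ matrix $W$, the lower constraints $(A_2\otimes B_2)\bc=\mathbf 0$ are exactly $A_2W=0$, i.e.\ every column of $W$ is a codeword of the $[\ell,\ell-\mu]$ MDS code $\cC_2$ of minimum distance $\mu+1$. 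If $W\neq 0$, some column is a nonzero $\cC_2$-codeword and hence has at least $\mu+1$ nonzero entries; each such index $j$ has $\mathbf w^{(j)}\neq\mathbf 0$, so $\bc^{(j)}\neq\mathbf 0$, producing at least $\mu+1$ nonzero blocks and weight at least $\delta(\mu+1)$. If instead $W=0$, then each nonzero block satisfies $B_1\bc^{(j)}=B_2\bc^{(j)}=\mathbf 0$, making it a nonzero codeword of the $[r+\delta-1,r-\nu]$ MDS code of minimum distance $\delta+\nu$, so its weight is at least $\delta+\nu$. This yields $d_{\mathrm{min}}\geq\min\{\delta(\mu+1),\delta+\nu\}$; and when $r=\nu$ the stacked code is $[r+\delta-1,0]$, hence trivial, so $W=0$ cannot occur for $\bc\neq\mathbf 0$ and only the bound $\delta(\mu+1)$ survives.

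I expect the distance step to be the main obstacle. The clean dichotomy there—blocks annihilated by $B_2$ contribute through the MDS distance $\delta+\nu$ of the stacked code, while blocks with nonzero $B_2$-syndrome are counted through the MDS distance $\mu+1$ of $\cC_2$—must be set up carefully, and recognizing that $r=\nu$ collapses the first alternative is the subtle point. By contrast the dimension computation, though it looks technical, rests on the single observation that $\rowspan(B_1)$ and $\rowspan(B_2)$ meet only at $\mathbf 0$, and the locality claim is immediate from the MDS property of $B_1$.
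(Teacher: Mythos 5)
Your proof is correct and follows essentially the same route as the paper's: the same full-row-rank computation for the dimension (resting on $\rowspan(B_1)\cap\rowspan(B_2)=\{\mathbf{0}\}$ and the full row rank of $A_2$), and the same blockwise MDS analysis for the distance. If anything, your dichotomy on the syndrome matrix $W$ is stated more carefully than the paper's treatment of the case $r>\nu$, which argues loosely by exhibiting codewords of weights $\delta(\mu+1)$ and $\delta+\nu$ rather than explicitly bounding every nonzero codeword from below.
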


A proof of Prop.~\ref{prop:GTP} is given in the appendix.


\section{Optimal Quaternary $(1,\delta)$-LRC}
\label{sec:r=1}

 In the remaining of this paper we consider $(r,\delta)$-LRCs over $GF(4)$ that achieve the Singleton-type bound. We will assume $\delta>2$. The symbol $\alpha$ denotes a primitive cube root of unity in $GF(4)$. The elements in $GF(4)$ are represented by 0, 1, $\alpha$ and $\alpha^2$. Sometime, we will use the symbol $\beta$ to denote $\alpha^2$.

As the residue code must be an MDS code over $GF(4)$ with distance larger than~2, the only possible code parameters are

\begin{itemize}
\item (distance 3) $[5,3,3]$, $[4,2,3]$, $[3,1,3]$;
\item (distance 4) $[6,3,4]$, $[5,2,4]$, $[4,1,4]$;
\item (distance $>4$) $[n,1,n]$  for $n>4$.
\end{itemize}

When $r=1$, we have $r|k$ for any $k$. By \cite[Theorem 2.2]{Kamath14}, the repair groups are mutually disjoint and each of them has size $\delta$, corresponding to a repetition code of length $\delta$.  The block length $n$ is thus a multiple of $\delta$. Let $\ell$ be the positive integer $\ell=n/\delta$, so that the code symbols are covered by $\ell$ mutually disjoint repair groups. 

 The minimum distance is given by
$$
d = \ell \delta - k - (k-1)(\delta-1) + 1 = (\ell -k +1) \delta.
$$
Since $d$ must be positive, we get $\ell \geq k$.
By Theorem~\ref{thm:d}, we have $d=(\ell-k+1)\delta\leq 4\delta$. Hence,
 $\ell - k \leq  3$. From Prop.~\ref{prop:residue_code}, the co-dimension of a residue code is  $\ell \delta - k - (k-1)(\delta-1) = d-1$. The residue code is equivalent to a repetition code of length $d=(\ell-k+1)\delta$ when $r=1$.

In the following, we divide the discussion into three cases: $\ell=k,k+1$, $\ell=k+2$, and $\ell=k+3$.  We use the notation 
\begin{equation}
H_\delta \triangleq \left[
\begin{array}{cccc|c}
1&0&\cdots &0 & 1 \\
0&1& \cdots&0 & 1 \\
\vdots&\vdots&\ddots&\vdots&1\\
0&0&\cdots&1&1
\end{array}
\right]
\label{eq:H_delta}
\end{equation}
to denote the parity-check matrix of a $[\delta,1,\delta]$ repetition code.

\noindent \underline{\em Case $\ell=k, k+1$}. When $n=k\delta$, the code parameters are
\begin{equation}
n=k \delta,\  k\geq 2, \ r=1, \ \delta > 2, \ d=\delta.
\label{LRC:r=1_d=delta}
\end{equation}
An optimal LRC with these code parameters is a concatenation of $k$ repetition codes of length $\delta$. A parity-check matrix of $\cC$ is given by
$ I_k \otimes H_\delta$.

When $n=(k+1)\delta$, there are $k+1$ repair groups and each local code is a repetition code. The residue code is a repetition code with length $2\delta$.  The parameters of an optimal code in this case are
\begin{equation}
n=(k+1)\delta, \ k \geq 2, \ r=1,\ \delta > 2,\ d=2\delta.
\label{LRC:r=1_d=2delta}
\end{equation}
LRCs with the above parameters can be constructed by generalized tensor product $\cC(k+1,H_\delta,A_2,B_2)$ with $A_2=(1,1,\ldots, 1)$ and $B_2=(0,0,\ldots, 0,1)$. The minimum distance is $2\delta$ by Prop.~\ref{prop:GTP}.

For example, the following is a parity-check matrix of an optimal $(1,4)$-LRC of length 12, dimension 2 and minimum distance 8,
$$H= {\small \left[
\begin{array}{cccc|cccc|cccc}
1&1&1&1&  0&0&0&0& 0&0&0&0\\
0&1&\alpha&\alpha^2& 0&0&0&0& 0&0&0&0 \\
0&1&\alpha^2&\alpha& 0&0&0&0& 0&0&0&0 \\ \hline 
0&0&0&0& 1&1&1&1& 0&0&0&0\\
0&0&0&0& 0&1&\alpha&\alpha^2& 0&0&0&0 \\
0&0&0&0& 0&1&\alpha^2&\alpha& 0&0&0&0\\ \hline 
0&0&0&0& 0&0&0&0& 1&1&1&1\\
0&0&0&0& 0&0&0&0& 0&1&\alpha&\alpha^2  \\
0&0&0&0& 0&0&0&0 & 0&1&\alpha^2&\alpha \\ \hline
0&1&1&1& 0&1&1&1& 0&1&1&1  
\end{array} \right]}.
$$
A generator matrix for this code is
$$ G=
\left[
\begin{array}{cccc|cccc|cccc}
1&1&1&1&0&0&0&0&1&1&1&1 \\
0&0&0&0&1&1&1&1&1&1&1&1
\end{array}
\right].
$$
After some column permutations, we can transform the generator matrix $G$ in a tensor-product format,
$$
\begin{bmatrix}
0&1&1 \\
1&0&1
\end{bmatrix} \otimes 
\begin{bmatrix} 1&1&1&1 \end{bmatrix}.
$$

\smallskip

\noindent \underline{\em Case $\ell=k+2$}. When $\ell=k+2$, the residue code of an optimal LRC has minimum distance $3\delta$, and is equivalent to an $[3\delta, 1, 3\delta]$ repetition code. A parity-check matrix has the form
\begin{equation}
\left[
\begin{array}{cc|ccc}
H_\delta  &&&&  \\
& \ddots &&& \\ \hline
&&H_\delta  &&  \\ 
&&& H_\delta & \\
&&&& H_\delta \\
A_1 & \cdots & A_{\ell-2} &A_{\ell-1} & A_\ell
\end{array}
\right]
\label{eq:H_r=1_a}
\end{equation}
where $A_1,\ldots, A_\ell$ are matrices with dimension $2\times \ell$. The submatrix on the lower right corner is a parity-check matrix of the residue code of the LRC. Since $H_\delta$ is given as in \eqref{eq:H_delta}, we can perform row reductions to the above parity-check matrix and assume without generality that the first $\delta-1$ columns in $A_i$ are all zeros, for $i=1,2,\ldots, k+2$,
$$
A_i = \begin{bmatrix}
0&0&\cdots &0 & a_{i1} \\
0&0&\cdots &0 & a_{i2}
\end{bmatrix}.
$$

In order to ensure that the distance strictly is larger than $2\delta$, the column vectors $(a_{i1},a_{i2})^T$, for $i=1,2,\ldots, k+2$, must be {\em projectively distinct};  the one-dimensional vector spaces
$$
V_i := \{ (\lambda a_{i1},\ \lambda a_{i2}):\, \lambda\in GF(4)\}
$$
are distinct points in the projective space $PG_2(4)$. Since we have exactly five points in the projective space $PG_2(4)$, there are at most 5 repair groups in an optimal LRC with parameters $r=1$ and $n=(k+2)\delta$. A parity-check matrix of an optimal LRC  with 5 repair groups can be obtained by setting $\ell=5$ and
\begin{gather*}
A_1 = \begin{bmatrix}
0&0&\cdots &0 & 1 \\
0&0&\cdots &0 & 0
\end{bmatrix},\ 
A_2 = \begin{bmatrix}
0&0&\cdots &0 & 1 \\
0&0&\cdots &0 & 1
\end{bmatrix},\ 
A_3 = \begin{bmatrix}
0&0&\cdots &0 & 1 \\
0&0&\cdots &0 & \alpha 
\end{bmatrix}, \\ 
A_4 = \begin{bmatrix}
0&0&\cdots &0 & 1 \\
0&0&\cdots &0 & \beta
\end{bmatrix},\ 
A_5 = \begin{bmatrix}
0&0&\cdots &0 & 0 \\
0&0&\cdots &0 & 1 
\end{bmatrix}
\end{gather*}
in~\eqref{eq:H_r=1_a}.
The parameters of optimal LRC in this case are
\begin{equation}
n=(k+2)\delta, \ k =2,3, \ r=1,\ \delta > 2,\ d=3\delta.
\label{LRC:r=1_d=3delta}
\end{equation}

\smallskip

\noindent \underline{\em Case $\ell=k+3$}. The residue code of an optimal LRC in this case has minimum distance $4\delta$, and is equivalent to a $[4\delta, 1, 4\delta]$ repetition code. A parity-check matrix has similar structure as in~\eqref{eq:H_r=1_a}, except that the submatrices $A_i$ at the bottom are $3\times \delta$ matrices in the form
$$
\begin{bmatrix}
0&0&\cdots &0&a_{i1} \\
0&0&\cdots &0&a_{i2} \\
0&0&\cdots &0&a_{i3} 
\end{bmatrix}.
$$

In order to rule out the possibility of having minimum distance equal $3\delta$, any three vectors from the set of vectors
$$
  \{ (a_{i1}, a_{i2}, a_{i3}): i=1,2,\ldots, k+3 \}
$$
must be linearly independent. The $3\times 6$ matrix formed by taking these $k+3$ vector as the columns is a generator matrix of an MDS code of dimension 3 over $GF(4)$. Hence the number of repair groups are restricted to be less than or equal to~6. The parameters of optimal LRC in this case are
\begin{equation}
n=(k+3)\delta, \ k =2,3, \ r=1,\ \delta > 2,\ d=4\delta.
\label{LRC:r=1_d=4delta}
\end{equation}
An LRC in this category with 6 repair groups can be defined by a parity-check matrix as given in~\eqref{eq:H_r=1_a} with $\ell=6$ and
\begin{gather*}
A_1 = \begin{bmatrix}
0&0&\cdots &0 & 1 \\
0&0&\cdots &0 & 0 \\
0&0&\cdots &0 & 0 
\end{bmatrix},\ 
A_2 = \begin{bmatrix}
0&0&\cdots &0 & 1 \\
0&0&\cdots &0 & 1 \\
0&0&\cdots &0 & 1 
\end{bmatrix},\ 
A_3 = \begin{bmatrix}
0&0&\cdots &0 & 1 \\
0&0&\cdots &0 & \alpha\\ 
0&0&\cdots &0 & \alpha^2 
\end{bmatrix}, \\
A_4 = \begin{bmatrix}
0&0&\cdots &0 & 1 \\
0&0&\cdots &0 & \beta \\
0&0&\cdots &0 & \beta^2 
\end{bmatrix},\ 
A_5 = \begin{bmatrix}
0&0&\cdots &0 & 0 \\
0&0&\cdots &0 & 0 \\
0&0&\cdots &0 & 1 
\end{bmatrix},\ 
A_6 = \begin{bmatrix}
0&0&\cdots &0 & 0 \\
0&0&\cdots &0 & 1\\ 
0&0&\cdots &0 & 0 
\end{bmatrix}.
\end{gather*}

\section{Optimal quaternary $(r,\delta)$-LRC for $r>1$ and $d=3$}
\label{sec:d=3}

When the minimum distance is 3, we have
\begin{equation}
d = 3 = n - k  - \big( \big\lceil \frac{k}{r} \big\rceil -1 \big)(\delta-1) + 1.
\label{eq:d3}
\end{equation}
We let $\cD_1, \cD_2,\ldots, \cD_M$ be a chain consisting of $M=\lceil k/r \rceil$ subcodes in $\cC^\perp$. Since $\cC$ is optimal, the dimension of the direct sum $\cD_1\oplus\cdots\oplus\cD_M$ is equal to $M(\delta-1)$, which must be less than or equal to $n-k$,
$$
\big\lceil \frac{k}{r} \big\rceil (\delta-1) \leq n-k =\big( \big\lceil \frac{k}{r} \big\rceil -1 \big)(\delta-1) + 2.
$$
This yields $\delta \leq 3$. Because we consider LRC with $\delta>2$, the value of the parameter $\delta$ is restricted to 3 when $d=3$. Any subcode in $\cC^\perp$ satisfying Conditions (ii') and (iii') has dimension~2.

Since the longest quaternary MDS code with minimum distance $\delta=3$ has length 5, the size of a repair group is upper bounded by 5. The value of $r$ is equal to either 2 or 3.



We distinguish the following cases.

\smallskip
\underline{\em Case 1, $r=2 \text{ or } 3$ and $k = 0 \bmod r$}. Since $r|k$, all repair groups are mutually disjoint. Each repair group has size $r+\delta-1=r+2$. Let $\ell$ denote $n/(r+3-1)$. From \eqref{eq:d3}, we obtain $k=r\ell$.

When $r=2$, the code parameters of an optimal LRC are
\begin{equation}
n=4\ell, \ k =2\ell, \ r=2,\ \delta =3,\ d=3\\ \   (\ell \geq 2).
\label{LRC:r=2_delta=3_d=3}
\end{equation}
We can take the tensor product 
$$H=I_\ell \otimes 
\begin{bmatrix}
0&1&1&1 \\
1&0&1&\alpha
\end{bmatrix}
$$
as a parity-check matrix.

When $r=3$, the code parameters of an optimal LRC are
\begin{equation}
n=5\ell, \ k =3\ell, \ r=3,\ \delta =3,\ d=3\ \ \  (\ell \geq 2).
\label{LRC:r=3_delta=3_d=3}
\end{equation}
We can construct a code with these parameters by the parity-check matrix
$$
H=I_\ell \otimes 
\begin{bmatrix}
0&1&1&1&1 \\
1&0&1&\alpha&\alpha^2
\end{bmatrix}.
$$

\smallskip

\underline{\em Case 2, $r=3$ and  $k=2\bmod r$}. We write $k=3\ell-1$ for some integer~$\ell$. From \eqref{eq:d3} we deduce that the code length should be $n=5\ell-1$. The parameters of an optimal LRC in this case are
\begin{equation}
n=5\ell-1, \ k =3\ell-1, \ r=\delta =d=3.\ (\ell\geq 2)
\label{LRC:r=3_k=2_delta=3_d=3}
\end{equation}
An optimal code for $\ell=3$ can be constructed from a parity-check matrix in the form
$$
H= \left[
\begin{array}{cccccccccccccc}
0&1&1&1&1 \\
1&0&1&\alpha&\alpha^2 \\
&&&&0&1&1&1&1\\
&&&&1&0&1&\alpha&\alpha^2 \\
&&&&&&&&&0&1&1&1&1\\
&&&&&&&&&1&0&1&\alpha&\alpha^2
\end{array} \right]
$$
The first two repair groups overlap in one position, while each of the remaining $\ell-2$ repair groups is mutually disjoint with all other repair groups. One can check that the minimum distance is 3, since any pair of columns in $H$ are linearly independent.

\smallskip
\underline{\em Case 3, $r=2$ and  $k=1\bmod r$}. The derivation is similar to the previous case. Write $k=2\ell-1$ for some integer~$\ell$. The code parameters are
\begin{equation}
n=4\ell-1, \ k =2\ell-1, \ r=2,\ \delta =3,\ d=3.\ (\ell\geq 2)
\label{LRC:r=2_k=1_delta=3_d=3}
\end{equation}
An LRC with the above parameters can be constructed by an $2\ell\times(4\ell-1)$ parity-check matrix in which two repair groups overlap in exactly one position. An example for $\ell=3$ is shown below,
\begin{equation}
H= {\small \left[
\begin{array}{ccccccccccc}
0&1&1&1 \\
1&0&1&\alpha \\
&&&0&1&1&1\\
&&&1&0&1&\alpha \\
&&&&&&&0&1&1&1\\
&&&&&&&1&0&1&\alpha
\end{array} \right]}.
\label{eq:H_r=2_delta=3_d=3}
\end{equation}

\smallskip
\underline{\em Case 4, $r=3$ and  $k=1\bmod r$}. Write $k=3\ell-2$ for some integer~$\ell$. The code parameters are
\begin{equation}
n=5\ell-2, \ k =3\ell-2, \ r= \delta = d=3.\ (\ell\geq 2)
\label{LRC:r=3_k=1_delta=3_d=3}
\end{equation}
We can construct a parity-check matrix with two repair groups overlapping in exactly two positions. The following is an example for $\ell=3$,
\begin{equation}
H= {\small \left[
\begin{array}{ccccccccccccc}
0&1&1&1&1 \\
1&0&1&\alpha&\alpha^2 \\
&&&0&1&1&1&1\\
&&&1&0&1&\alpha&\alpha^2 \\
&&&&&&&&0&1&1&1&1\\
&&&&&&&&1&0&1&\alpha&\alpha^2
\end{array} \right]}.
\label{eq:H_r=3_delta=3_d=3}
\end{equation}

\section{Optimal Quaternary LRC with $r>1$ and $d=4$}
\label{sec:d=4}

Consider an optimal $(r,\delta)$-LRC $\cC$ with minimum distance is 4. We have 
\begin{equation}
d = 4 = n - k  - \big( \big\lceil \frac{k}{r} \big\rceil -1 \big)(\delta-1) + 1 .
\label{eq:d4}
\end{equation}
The residue code of an optimal $(r,\delta)$-LRC $\cC$ is an MDS code over $GF(4)$ with minimum distance 4. The parameters of the residue code are $[6,3,4]$, $[5,2,4]$, or $[4,1,4]$.

Because an optimal LRC restricted to a repair group is an MDS with minimum distance $\delta$, and the cases $r=1$ and $\delta=2$ are excluded, the value of $\delta$ is restricted to 3 or 4. 
 For $q=4$ and $\delta=4$, the largest size of a repair group is 6, the value of parameter $r$ in this case is less than or equal to 3. After excluding $r=1$ and $\delta=2$, the possible code parameters $(r,\delta)$ are $(2,3)$, $(3,3)$, $(2,4)$ and $(3,4)$.

\subsection{$r=2$, $\delta=3$}

\underline{\em Case 1: $k=0\bmod 2$}

Let $\ell=k/2$. By \eqref{eq:d4}, a potential optimal LRC in this case has length
$$
 n =3+2\ell+2(\ell-1) = 4\ell+1,
$$
which is not divisible by $r+\delta-1=4$. However, since $r|k$, by~\cite[Theorem 2.2]{Kamath14}, $n$ should be divisible by 4. This contradiction rules out the existence of optimal $(2,3)$-LRC with minimum distance 4 when $r|k$.

\underline{\em Case 2: $k=1\bmod 2$}

Write $k=2\ell-1$ for some integer $\ell$. By \eqref{eq:d4}, the length $n$ is
$$
n = 3+(2\ell-1)+2(\ell-1) = 4\ell.
$$
The code parameters of an optimal LRC in this case are
\begin{equation}
n=4\ell, \ k =2\ell-1, \ r=2,\ \delta =3,\ d=4\ \ (\ell\geq 2).
\label{LRC:r=2_k=1_delta=3_d=4}
\end{equation}
We can construct LRC with these parameters using the generalized tensor product construction. Let $B_1$ and $B_2$ be the matrices
$$
B_1 = \begin{bmatrix} 1&1&1&1 \\0&1&\alpha&\alpha^2\end{bmatrix}, \
B_2 = \begin{bmatrix} 0&1&\alpha^2&\alpha \end{bmatrix}.
$$
The matrix $B_1$ is a parity-check matrix of a $[4,2,3]$ MDS code, while the concatenated matrix $\begin{bmatrix}B_1\\B_2\end{bmatrix}$ is a parity-check matrix of a $[4,1,4]$. 
Let $\mathbf{1}_\ell$ be an $\ell\times 1$ all-one row vector. By Prop.~\ref{prop:GTP}, the code $\cC(\ell,B_1,\mathbf{1}_{\ell},B_2)$ is a $(2,3)$-LRC with minimum distance~4.

\subsection{$r=3$, $\delta=3$}

\underline{\em Case 1: $k=2\bmod r$}

Let $k=3\ell-1$ for some integer $s$. The code length $n$ is equal to
$$
n = 3+(3\ell-1)+2(\ell-1) = 5\ell
$$
by \eqref{eq:d4}. The code parameters of an optimal LRC in this case are
\begin{equation}
n=5\ell, \ k =3\ell-1, \ r=3,\ \delta =3,\ d=4\ \ (\ell\geq 2).
\label{LRC:r=3_k=1_delta=3_d=4}
\end{equation}

We can construct optimal LRCs with the above parameters using the matrix
\begin{equation} B=
\begin{bmatrix}
1&1&1&1&0 \\
0&1&\alpha&\alpha^2&1\\
0&1&\alpha^2&\alpha&0
\end{bmatrix},
\label{eq:MDS523}
\end{equation}
which is a parity-check matrix of a quaternary $[5,2,4]$ MDS code.
Let $B_1$ be the $2\times 5$ consisting of the first two rows of $B$. Using $B_1$ as a parity-check matrix, we can construct a quaternary $[5,3,3]$ MDS code. We can apply the generalized tensor product construction to obtain 
an optimal $(3,3)$-LRC $\cC(\ell,B_1,\mathbf{1}_\ell, B_2)$ with minimum distance 4.

\smallskip
\underline{\em Case 2: $k=1\bmod r$}

If $k=1\bmod 3$, we write $k$ as $3\ell-2$ and check that the code length should be
$$
n = 3+(3\ell-2)+2(\ell-1) = 5\ell-1.
$$
The code parameters are
\begin{equation}
n=5\ell-1, \ k =3\ell-2, \ r=3,\ \delta =3,\ d=4\ \ (\ell\geq 2).
\label{LRC:r=3_k=1_delta=3_d=4b}
\end{equation}
The construction is similar to the previous case, except that the last repair group on the lower right corner of the parity-check matrix is ``shifted to the left'', such that it overlaps with the second last repair group at one position. An example with $\ell=3$ repair groups is shown below,
$$ H= {\small \left[
\begin{array}{ccccc|ccccc|cccc}
1&1&1&1&0 \\
0&1&\alpha&\alpha^2&1 \\ \hline
&&&&&1&1&1&1&0 \\ 
&&&&&0&1&\alpha&\alpha^2&1 \\ \hline
&&&&&&&&&1&1&1&1&0 \\
&&&&&&&&&0&1&\alpha&\alpha^2&1 \\ \hline
0&1&\alpha^2&\alpha&0&0&1&\alpha^2&\alpha&0&1&\alpha^2&\alpha&0
\end{array} \right]}.
$$
It defines a $(3,3)$-LRC with length 15, dimension 7, and minimum distance 4.

\smallskip

\underline{\em Case 3: $k=0\bmod r$}

When $k$ is divisible by 3, we can express $k$ as $3\ell$. From \eqref{eq:d4}, we see that the length of an optimal LRC 
$$
n = 3+3\ell+2(\ell-1) = 5\ell+1
$$
is not divisible $5=(r+\delta-1)$.  An optimal LRC would violate ~\cite[Theorem 2.2]{Kamath14}, which says that the code length should be divisible by $(r+\delta-1)$ when $r|k$. Hence there is no Singleton-optimal LRC in this case.

\subsection{$r=2$, $\delta=4$}

Write $k=2\ell-e$, where $e=0,1$.  We note that for both $e=0,1$, we have $\lfloor (k-1)/r\rfloor = \ell-1$.
the code length is
$$
n = 3+(2\ell-e)+3(\ell-1) = 5\ell-e.
$$
The code parameters are
\begin{equation}
n=5\ell-e, \ k =2\ell-e, \ r=2,\ \delta =4,\ d=4\ \ (\ell\geq 2,\ e=0,1).
\label{LRC:r=2_delta=4_d=4}
\end{equation}

We can use the matrix $B$ in \eqref{eq:MDS523} to construct LRCs with the above code parameters. For even $k=2\ell$, we have $\ell$ mutually disjoint repair groups. We can take the tensor product $I_{\ell}\otimes B$ as a parity-check matrix. For odd $k=2\ell-1$,
two repair groups overlap at exactly one position, while the remaining repair groups are mutually disjoint with the others. An example for $\ell=3$ is shown below,
$$
H = {\small  \left[
\begin{array}{cccccccccccccc}
1&1&1&1&0 \\
0&1&\alpha&\alpha^2&1 \\
0&1&\alpha^2&\alpha&0 \\
&&&&1&1&1&1&0 \\
&&&&0&1&\alpha&\alpha^2&1 \\
&&&&0&1&\alpha^2&\alpha&0 \\
&&&&&&&&&1&1&1&1&0 \\
&&&&&&&&&0&1&\alpha&\alpha^2&1 \\
&&&&&&&&&0&1&\alpha^2&\alpha&0 
\end{array} \right]}.
$$

\subsection{$r=3$, $\delta=4$}

Write $k=3\ell-e$, where $e$ is an integer equals to 0, 1 or 2. The code length is
$$
n = 3 + (3\ell-e) + 3(\ell-1) = 6\ell-e.
$$
The code parameters are
\begin{equation}
n=6\ell-e, \ k =3\ell-e, \ r=3,\ \delta =4,\ d=4\ \ (\ell\geq 2,\ e=0,1,2).
\label{LRC:r=3_delta=4_d=4}
\end{equation}

Let $D$ denote the $3\times 6$ parity-check matrix 
\begin{equation} D=
\begin{bmatrix}
1&1&1&1&0&0 \\
0&1&\alpha&\alpha^2&1&0\\
0&1&\alpha^2&\alpha&0&1
\end{bmatrix},
\label{eq:MDS634}
\end{equation}
When $n=6\ell$, we can take the tensor product $I_\ell \otimes D$ as the parity-check matrix. When $n=6\ell-1$ or $6\ell-2$, the structure of the parity-check matrix is similar to the matrices in \eqref{eq:H_r=2_delta=3_d=3} and~\eqref{eq:H_r=3_delta=3_d=3}.

\section{Optimal Quaternary $(r,\delta)$-LRC with $r>1$, $k=r+1$, $d>4$}
\label{sec:d>4a}

Prop.~\ref{prop:residue_code} says that if we restrict an optimal LRC to a repair group, the resulting punctured code is an MDS code with minimum distance $\delta$. For $q=4$, the parameters of MDS codes are: $[\delta,1,\delta]$, $[\delta+1,\delta,2]$, $[4,2,3]$, $[5,3,3]$, $[5,2,4]$ and $[6,3,4]$. Neglecting the code parameters with $r=1$ and $\delta=2$, there are only four combinations for the code parameters $(r,\delta)$, and they are $(2,3)$, $(2,4)$, $(3,3)$, and $(3,4)$. 


We need the following structural properties when its minimum distance is larger than the alphabet size. We note that parts (iii) and the conclusion in part (iv) also hold for optimal LRC under the condition $r|k$~\cite[Theorem 2.2]{Kamath14}.

\begin{prop}
Let $\cC$ be an optimal $q$-ary $k$-dimensional $(r,\delta)$-LRC with  minimum distance $d_{\mathrm{min}}(\cC)$ strictly larger than~$q$. Then
\begin{enumerate}
\item[(i)] a residue code of $\cC$ has dimension 1, and is equivalent to a repetition code;
\item[(ii)] $r | (k-1)$;
\item[(iii)] each repair group $\mathcal{R}$ in $\cC$ has size $r+\delta-1$;
\item[(iv)] furthermore, if $k>2r$, then the code length $n$ is divisible by $r+\delta-1$, and the code symbols can be partitioned into disjoint repair groups of size $r+\delta-1$.
\end{enumerate}
\label{prop:disjoint_support}
\end{prop}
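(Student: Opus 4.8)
The backbone of the argument is Theorem~\ref{thm:d}, which forces a dichotomy on the minimum distance of an optimal LRC: $d_{\mathrm{min}}(\cC)\le q$ when $r\nmid(k-1)$, and $d_{\mathrm{min}}(\cC)\le\delta q$ when $r\mid(k-1)$. I would prove the four parts essentially in the order (ii), (i), (iii), (iv), since (ii) unlocks the rest. For part~(ii), the hypothesis $d_{\mathrm{min}}(\cC)>q$ is directly incompatible with the first branch of Theorem~\ref{thm:d}; hence we cannot be in the case $r\nmid(k-1)$, so $r\mid(k-1)$ must hold. This is the cleanest step and I would dispose of it first.

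For part~(i), once $r\mid(k-1)$ we have $\lfloor(k-1)/r\rfloor=(k-1)/r$, so a maximal chain contains enough subcodes to reduce the residual dimension all the way down. Concretely, I would invoke Prop.~\ref{prop:residue_code}(i): the residue code is an MDS code with co-dimension $n-k-\lfloor(k-1)/r\rfloor(\delta-1)=d-1$. Combining this with the range bound Prop.~\ref{prop:range_k_r}, namely $k'\ge k-\lfloor(k-1)/r\rfloor r = k-(k-1)=1$, pins the residue dimension to $k'=1$ (it is at least~$1$, and an MDS code of dimension~$1$ is exactly a repetition code). I should also confirm $k'\le$ something forcing equality; since the chain used for the residue code has length $\lfloor(k-1)/r\rfloor$ and each step kills dimension, the lower bound $k'\ge 1$ together with the MDS structure and $d>q\ge\delta$ forces the repetition-code conclusion. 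So (i) follows by assembling the two earlier propositions rather than by fresh computation.

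Parts~(iii) and~(iv) are the substantive ones. For (iii), Prop.~\ref{prop:residue_code}(ii) already gives that every repair-group subcode in $\cC^\perp$ has dimension exactly $\delta-1$; I would argue that since the restriction of $\cC$ to a repair group $\cR$ is MDS with minimum distance exactly~$\delta$ (Prop.~\ref{prop:residue_code}(iii)) and the local parity-check matrix has rank $\delta-1$, the support has size exactly $r+\delta-1$ rather than something smaller. The point is that a smaller repair group would give a smaller local MDS code whose length could not support the required dimension while covering its share of the $k$ information symbols across the whole code; more carefully, if some repair group had size $<r+\delta-1$, the total length $n$ would be too small relative to the optimal-distance bookkeeping in~\eqref{eq:Singleton_like}, contradicting optimality. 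I expect this counting argument to be where the real care is needed.

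For part~(iv), assuming $k>2r$ I would use the ``moreover'' clause of Prop.~\ref{prop:range_k_r}: when equality holds in~\eqref{eq:range_k_r} (which we established via $k'=1$ in part~(i)), the supports of the first $\lfloor(k-1)/r\rfloor$ subcodes in a chain are mutually disjoint. The remaining task is to show the leftover symbols also fall into disjoint size-$(r+\delta-1)$ groups and that $\lceil k/r\rceil$ disjoint repair groups tile all of $[n]$, whence $(r+\delta-1)\mid n$. \textbf{The main obstacle} I anticipate is exactly this last tiling/partition step: disjointness of the first $\lfloor(k-1)/r\rfloor$ groups plus part~(iii) gives uniform group size, but I must rule out overlaps among \emph{all} repair groups (not merely the chain's first ones) and verify that the hypothesis $k>2r$ is precisely what forbids a ``short'' overlapping configuration at the tail. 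I would handle this by a dimension-count on $\cC^\perp$ comparing $\lceil k/r\rceil(\delta-1)$ against $n-k$, using equality in the Singleton-type bound to force every repair group to be disjoint and full-sized, so that their supports partition $[n]$ and $n=\lceil k/r\rceil(r+\delta-1)$ is a multiple of $r+\delta-1$.
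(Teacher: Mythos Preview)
Your handling of part~(ii) is correct and identical to the paper's. Part~(i) is roughly right once you make explicit what ``the MDS structure and $d>q$'' means: the residue code is MDS (defect~$0$) with minimum distance equal to $d>q$, and Theorem~\ref{thm:defect} with $s=0$ then forces its dimension to be~$1$. Your detour through Prop.~\ref{prop:range_k_r} only gives $k'\ge 1$; you still need Theorem~\ref{thm:defect} for the upper bound, so you may as well invoke it directly as the paper does.

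The real gap is in part~(iii). Your ``counting argument'' does not work as stated: the Singleton-type bound~\eqref{eq:Singleton_like} relates $n,k,r,\delta,d$ but says nothing about individual repair-group sizes, and an optimal LRC can certainly have repair groups of size strictly less than $r+\delta-1$ in general (the paper itself gives examples, e.g.\ in Section~\ref{sec:d=3}). What forces full-size repair groups here is specifically the hypothesis $d>q$, and the mechanism is again Theorem~\ref{thm:defect}. The paper's argument is: if some repair group $\cR_0$ has size $<r+\delta-1$, start a chain with its subcode $\cD_0$ and extend to a maximal chain; the union of the supports of the first $(k-1)/r$ subcodes is then at most $(k-1)/r\cdot(r+\delta-1)-1$, which makes the residue code have length at least one more than expected while keeping the same co-dimension $d-1$, hence dimension $\ge 2$. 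Theorem~\ref{thm:defect} applied to this MDS residue code then gives $d\le q$, a contradiction.

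Part~(iv) is handled by the same device rather than by Prop.~\ref{prop:range_k_r}: if two repair groups overlap, place their two subcodes as the first two in a chain (here $k>2r$ ensures $(k-1)/r\ge 2$, so both fit among the first $(k-1)/r$ subcodes), and the overlap again shrinks the support union by at least one, giving residue dimension $\ge 2$ and the same contradiction via Theorem~\ref{thm:defect}. Your proposed dimension-count on $\cC^\perp$ comparing $\lceil k/r\rceil(\delta-1)$ with $n-k$ will not distinguish the disjoint case from the overlapping case, since Prop.~\ref{prop:residue_code}(ii) already pins each local subcode dimension to exactly $\delta-1$ regardless of overlaps. The unifying idea you are missing is that Theorem~\ref{thm:defect} is the engine for all of (i), (iii), and (iv): any deviation from ``full-size disjoint repair groups'' produces a residue code of dimension $\ge 2$, which is incompatible with $d>q$.
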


\begin{proof} 

(i) Under the hypothesis that $\cC$ has minimum distance larger than $q$, the residue code is an MDS code with minimum distance larger than $q$. Part (i) then follows from Theorem~\ref{thm:defect}.

(ii) The property $r|(k-1)$ is a direct consequence of Theorem~\ref{thm:d}.

(iii) Let $\mathscr{D}$ be the set of all subcodes in $\cC^\perp$ that satisfy Conditions (b') and (c').
Suppose $\mathcal{R}_0$ is a repair group with size strictly less than $r+\delta-1$, and $\cD_0$ be the subcode in $\cC^\perp$ associated with $\cR_0$. We form a chain  as follows. We initialize by letting $\cD_0$ be the first subcode in the chain, and iteratively append a subcode to it if the resulting list satisfies the definition of chain. We stop when the chain is maximal. By Prop.~\ref{prop:chain}, we have at least $\lceil k/r \rceil$ subcodes in the chain. By part (ii), we know that $\lceil k/r \rceil - 1 = (k-1)/r$ is a positive integer. Since one of the subcode in the chain has support strictly less than $r+\delta-1$, the support of $\cD_1\oplus\cdots\oplus\cD_{(k-1)/r}$ is upper bounded by
\begin{equation}
|\supp(\cD_1\oplus\cdots\oplus\cD_{(k-1)/r})|
\leq \frac{k-1}{r}(r+\delta-1) - 1
\label{eq:disjoint_support_proof}
\end{equation}
From Prop.~\ref{prop:residue_code} we know that the co-dimension of the residue code is $n-k-\frac{k-1}{r}(\delta-1)$. Hence, the dimension of the residue code is lower bounded by
\begin{align*}
& [n- (\frac{k-1}{r}(r+\delta-1) - 1)] - [n-k - \frac{k-1}{r}(\delta-1)] \geq 2.
\end{align*}
We can now apply Theorem~\ref{thm:defect} and get a contradiction that $d_{\mathrm{min}}(\cC)\leq q$.

(iv) 
Suppose $k>2r$. Under this assumption, any maximal chain has at least three subcodes. If there are two overlapping repair groups, we can set up a chain with these two overlapping subcodes as the first two subcodes. The support size of the first $(k-1)/r$ subcodes in this chain is upper bounded by \eqref{eq:disjoint_support_proof}. We can repeat the same argument in part (iii) and show that the residue code has dimension at least 2. By Theorem~\ref{thm:defect}, we get a contradiction $d_{\mathrm{min}}(\cC)\leq q$. This shows that any two subcodes in $\mathscr{D}$ has mutually disjoint supports. Since all repair groups has the same size, namely $r+\delta-1$, the code length must be divisible  by $r+\delta-1$.
\end{proof}

In this section, we consider the case $k=r+1$. LRCs with dimension $k>r+1$ will be classified in the next section. In this section and the next section, the computations are performed in Sage~\cite{sagemath}.

When $k=r+1$, the Singleton-type bound says that an optimal LRC in this case is equal to
\begin{equation}
d = n - k - \lfloor (k-1)/r \rfloor (\delta-1) + 1 = n - k - \delta + 2.
\label{eq:d_AMDS}
\end{equation}
In other words, an optimal LRC has defect $\delta-1$.

\subsection{$r\in\{2,3\}$, $\delta=3$, $k=r+1$}

When $\delta=3$, an optimal LRC is a code with defect 2. We can look up quaternary codes with defect 2 from some online table, such as MinT \cite{MinT}. For dimension 4, the largest  code length is 17, and for dimension 3, the largest code length is 16.

For dimension $k=4$, the longest code with defect 2 is an ovoid code. It can also be constructed by a quaternary BCH code with length 17 and designed distance 12. To construct this BCH code, we can take a 17-th root of unity in $GF(256)$, say $\gamma$, define a quaternary BCH $\cC_{17}$ code with zeros 
$$
\gamma^{-5},\ \gamma^{-4},\ 
\gamma^{-3},\ \gamma^{-2},\ 
\gamma^{-1},\ \gamma^{0},\ \gamma^{1},\ \gamma^{2},\  
\gamma^{3},\ \gamma^{4},\ \gamma^5.
$$
The resulting code is a $[17,4,12]$ code. A generator matrix in cyclic form is
$$
G_{17} = \left[ \begin{array}{ccccccccccccccccc}
1&1&\alpha&0&1&\alpha&\alpha^2&\alpha^2&\alpha&1&0&\alpha&1&1&0&0&0 \\
0&1&1&\alpha&0&1&\alpha&\alpha^2&\alpha^2&\alpha&1&0&\alpha&1&1&0&0 \\
0&0&1&1&\alpha&0&1&\alpha&\alpha^2&\alpha^2&\alpha&1&0&\alpha&1&1&0 \\
0&0&0&1&1&\alpha&0&1&\alpha&\alpha^2&\alpha^2&\alpha&1&0&\alpha&1&1 \\
\end{array} \right].
$$
The repair groups form a $3$-(17,5,1) balanced incomplete block design (BIBD). This LRC has the special property that any three code symbols are contained in a unique repair group. Other optimal $(3,3)$-LRC of dimension 4 can be obtained by puncturing the code $\cC_{17}$. The details are tabulated in Table~\ref{table:puncture33}. We remark that the locality of the ovoid code $\cC_{17}$ is also studied in~\cite{FLY}.  We refer the readers to \cite{BritzShiromoto} for more details about designs from subcode supports.

\begin{table}
\caption{Construction of Optimal $(3,3)$-LRC of Dimension 4 for $10\leq n \leq 17$.}
\label{table:puncture33}
\begin{center}
\begin{tabular}{|c|c|c|l|} \hline
Code length & Dim. & Min. Dist. &Construction \\ \hline \hline
17 & 4& 12& $C_{17}$\\
16 & 4& 11&Puncture $C_{17}$ at positions 17 \\
15 & 4& 10&Puncture $C_{17}$ at positions 16,17 \\
14 & 4& 9&Puncture $C_{17}$ at positions 15,16,17 \\
13 & 4& 8&Puncture $C_{17}$ at positions 14,15,16,17 \\
12 & 4& 7&Puncture $C_{17}$ at positions 13,14,15,16,17 \\
11 & 4& 6&Puncture $C_{17}$ at positions 12,13,14,15,16,17 \\
10 &  4& 5&Puncture $C_{17}$ at positions 5,6,9,11,12,13,17 \\ \hline
\end{tabular}
\end{center}
\end{table}

We summarize the parameters of optimal $(3,3)$-LRC with dimension 4 below.
\begin{equation}
d+5, \ k =4, \ r=3,\ \delta =3,\ 5\leq d\leq 12.
\label{LRC:r=3_delta=3_d>4}
\end{equation}

By shortening $\cC_{17}$ at the last position, we can obtain an optimal quaternary $(2,3)$ code $\cC_{16}$ with dimension~3. It has a generator matrix
$$
G_{16} = \left[ \begin{array}{cccccccccccccccc}
1&1&\alpha&0&1&\alpha&\alpha^2&\alpha^2&\alpha&1&0&\alpha&1&1&0&0 \\
0&1&1&\alpha&0&1&\alpha&\alpha^2&\alpha^2&\alpha&1&0&\alpha&1&1&0 \\
0&0&1&1&\alpha&0&1&\alpha&\alpha^2&\alpha^2&\alpha&1&0&\alpha&1&1 \\
\end{array} \right].
$$
One can check that there are 20 repair groups and they form a $2$-$(16,4,1)$ BIBD, i.e., any two code symbols belong to a unique repair group. The LRC $\cC_{16}$ has very large availability. Using the balanced property of BIBD, each code symbol, say at position $i$, is covered by precisely 5 repair groups, say $\cR_1$, $\cR_2$, $\cR_3$, $\cR_4$, and $\cR_5$. If we remove the index $i$ from these 5 repair groups, we will get 5 mutually disjoint sets, i.e.,
$$
(\cR_a \setminus \{i\}) \cap (\cR_b \setminus \{i\}) = \emptyset 
$$
for $a\neq b$.

By puncturing, we can obtain optimal 3-dimensional $(2,3)$-LRC with length $9\leq n \leq 15$. The puncturing patterns are shown in Table~\ref{table:puncture23}. 

\begin{table}
\caption{Construction of Optimal $(2,3)$-LRC of Dimension 3 for $9\leq n \leq 16$.}
\label{table:puncture23}
\begin{center}
\begin{tabular}{|c|c|c|l|} \hline
Code length & Dim. & Min. Dist. &Construction \\ \hline \hline
16 & 3& 12& $C_{16}$  \\
15 & 3& 11&Puncture $C_{16}$ at position 1 \\
14 & 3& 10&Puncture $C_{16}$ at positions 1,2 \\
13 & 3& 9&Puncture $C_{16}$ at positions 1,2,3 \\
12 & 3& 8&Puncture $C_{16}$ at positions 1,2,3,4 \\
11 & 3& 7&Puncture $C_{16}$ at positions 1,2,3,4,5 \\
10 & 3& 6&Puncture $C_{16}$ at positions 1,2,3,4,5,8 \\
9 &  3& 5&Puncture $C_{16}$ at positions 1,2,3,10,12,16 \\ \hline
\end{tabular}
\end{center}
\end{table}

The parameters of optimal $(2,3)$-LRC with dimension 3 are
\begin{equation}
n=d+4, \ k =3, \ r=3,\ \delta =3,\ 5\leq d\leq 12.
\label{LRC:r=2_delta=3_d>4}
\end{equation}

\subsection{$r\in\{2,3\}$, $\delta=4$, $k=r+1$}

Optimal $(r,4)$-LRC with $k=r+1$ has defect 3. The longest code length for dimension $k=3$ is 21, attained  by a simplex code over $GF(4)$.  The longest code length for dimension $k=4$ is 18~\cite{MinT}.

We can obtain a quaternary simplex code $\cC_{21}$ of length 21 by the following generator matrix,
$$
G_{21} = \left[ \begin{array}{cccccccccc cccccccccc c}
1&0&1& 1&1    &0   &1&0&1&1&1&1&0&1&1&1&1&0&1&1&1  \\
0&1&1&\alpha&\alpha^2& 0&0& 1& 1& \alpha&\alpha^2& 0& 1& 1& \alpha& \alpha^2& 0& 1& 1& \alpha& \alpha^2 \\
0& 0& 0& 0& 0& 1& 1& 1& 1& 1& 1& \alpha&\alpha&\alpha&\alpha&\alpha&\alpha^2&\alpha^2&\alpha^2&\alpha^2&\alpha^2
\end{array} \right].
$$
The columns are nonzero vectors in $GF(4)^3$. We normalize the generator matrix such that the top entry in each column is equal to~1.
This is an optimal quaternary $(2,3)$-LRC with dimension 3 and minimum distance~16.
 This code belong to a family of optimal $(2,q)$-LRC with length $q^2+q+1$ and minimum distance $q^2$~\cite{FLY}. The repair groups have constant size 5, and they form a $2$-$(21,5,1)$ BIBD. Each code symbol is covered by 5 repair groups. By puncturing at some appropriately chosen coordinates, we can obtain other optimal $(2,3)$-LRC of dimension 3. In Table~\ref{table:puncture24}, we illustrate how to obtain $(2,3)$-LRC with length 12 to 21. 

We next show that length $n=10$ and $n=11$ is not possible in this case. We need the following lemma.

\begin{lemma}
 In an optimal $(2,4)$-LRC with dimension 3, any two repair groups intersect at exactly one coordinate.
\label{lemma:repair_overlap1}
\end{lemma}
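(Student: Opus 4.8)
The plan is to prove Lemma~\ref{lemma:repair_overlap1} by combining the structural constraints coming from the Singleton-optimality of the code with the specific parameters forced by $q=4$, $\delta=4$, $r=2$, $k=3$. First I would record the basic numerology. Since $k=r+1=3$ and $\delta=4$, an optimal $(2,4)$-LRC has defect $\delta-1=3$ by \eqref{eq:d_AMDS}, and by Prop.~\ref{prop:disjoint_support}(iii) every repair group has size exactly $r+\delta-1=5$. By Prop.~\ref{prop:residue_code}(iii) the restriction of $\cC$ to any repair group is a $[5,2,4]$ MDS code, so each repair group corresponds (via its local parity-check equations) to $\delta-1=3$ independent dual vectors supported on those 5 coordinates, i.e.\ a subcode of $\cC^\perp$ of dimension $3$ with support size $5$. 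Since $\dim(\cC)=3$, the image of $\cC$ restricted to any $5$-subset is $2$-dimensional, which means the $5$ columns of a generator matrix indexed by a repair group span only a $2$-dimensional column space and any two of them are independent; equivalently, in the normalized generator matrix (columns as points of $PG(2,4)$) the five coordinates of a repair group are five \emph{distinct collinear} points lying on a line of $PG(2,4)$, which has exactly $q+1=5$ points.

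The key reformulation is therefore geometric: since $\cC$ has dimension $3$ and no zero coordinate, fix a generator matrix $G$ whose columns, viewed projectively, are points of $PG(2,4)$; a repair group of size $5$ whose local code is $[5,2,4]$ MDS is precisely a set of $5$ coordinates whose columns are $5$ distinct points on a common line, hence a \emph{full line} of $PG(2,4)$. Two distinct repair groups then correspond to two distinct lines of $PG(2,4)$. In a projective plane any two distinct lines meet in exactly one point, so the two $5$-sets share the coordinate(s) mapping to that common point. The remaining point is to upgrade ``the two lines share exactly one projective point'' to ``the two repair groups share exactly one coordinate,'' which requires that no two coordinates of $\cC$ have projectively equal columns. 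That injectivity follows because two coordinates with proportional columns would force $d_{\mathrm{min}}(\cC)\le n-2$ in a way incompatible with the defect-$3$, and more directly because a repair group is a $5$-set of distinct points on a line of a $5$-point line, so all five columns within one group are already projectively distinct; combining overlapping groups then shows global injectivity of the column-to-point map on the union of repair groups, which is all of $[n]$ by covering condition~(a).

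I would structure the write-up as: (1) invoke Prop.~\ref{prop:disjoint_support}(iii) and Prop.~\ref{prop:residue_code}(iii) to pin down that each repair group is a $[5,2,4]$ local code on $5$ coordinates; (2) translate ``$[5,2,4]$ MDS local code'' into ``the $5$ columns are distinct points on a line of $PG(2,4)$,'' using that a line of $PG(2,4)$ has exactly $5$ points so the group is an entire line; (3) argue the column map is injective so that coordinates correspond bijectively to the points they represent; (4) apply the incidence axiom that two distinct lines of a projective plane meet in exactly one point, concluding that two distinct repair groups intersect in exactly one coordinate. The main obstacle I anticipate is step~(3): rigorously ruling out the degenerate configuration where two distinct coordinates carry proportional (projectively equal) columns, since if that were allowed two lines could formally share a single \emph{point} while the corresponding coordinate \emph{sets} overlap in more than one index. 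I would handle this by showing such a repeated column would let one delete a coordinate without lowering the minimum distance, contradicting the tightness of the defect bound (equivalently, it would make some $[5,2]$ local code fail to have $d_{\min}=4$), thereby forcing the column map to be injective and completing the count.
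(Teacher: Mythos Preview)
Your geometric approach via $PG(2,4)$ is sound in outline and genuinely different from the paper's proof. The paper argues by case analysis on $|\cR_1\cap\cR_2|\in\{0,2,3,4\}$: it uses the weight distribution of the $[5,2,4]$ MDS code (specifically $A_3=A_5=0$) together with the fact that the residue code is a repetition code to reach a contradiction in each case separately. You instead identify each repair group with a full line of $PG(2,4)$ and invoke the incidence axiom that two distinct lines of a projective plane meet in exactly one point. Your route is more conceptual, explains \emph{why} the overlap is exactly one rather than eliminating the other values one by one, and generalises verbatim to optimal $(2,q)$-LRCs of dimension~$3$ over any $GF(q)$; the paper's route is more elementary but requires a fresh argument for each forbidden intersection size.

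There is, however, a real gap at your step~(3). Neither of the fixes you sketch establishes injectivity of the column map. Puncturing a repeated coordinate need not preserve $(2,4)$-locality, so you cannot apply any LRC Singleton-type or defect bound to the punctured code; and your ``equivalently, it would make some $[5,2]$ local code fail to have $d_{\min}=4$'' already assumes the two proportional columns lie in a common repair group, which is precisely what is in question. The clean repair is to reuse Prop.~\ref{prop:disjoint_support}(i): if columns $i\neq j$ of $G$ are proportional, pick a repair group $\cR\ni i$; the shortened code at $\cR$ is a repetition code on $[n]\setminus\cR$, so there exists $c\in\cC$ with $c|_\cR=0$ (in particular $c_i=0$) and $c_m\neq 0$ for every $m\notin\cR$. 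If $j\notin\cR$ this gives $c_j\neq 0$, contradicting $c_j=\lambda c_i=0$; hence $j\in\cR$, and then the local $[5,2,4]$ MDS code on $\cR$ has two proportional columns, contradicting that its dual is $[5,3,3]$. With injectivity established, distinct repair groups correspond to distinct lines (otherwise injectivity would force the two $5$-sets to coincide), and the projective-plane axiom yields $|\cR_1\cap\cR_2|=1$.
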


\begin{proof}
The proof relies on the weight distribution of the $[5,2,4]$ MDS code,
$$A_0=1,\ A_1=A_2=A_3=0,\ A_4=15,\ A_5=0.$$
There is no codeword with Hamming weight~5.

When $k=3$ an $r=2$, a maximal chain of subcodes contains at least $\lceil k/r \rceil=2$ subcodes.  Suppose there are two repair groups with disjoint supports. By permuting the coordinates we may assume that these two repair groups are located at positions 1 to 5, and positions 6 to~10. If we puncture the first repair group, the remaining part of the codeword is a repetition code. However, the weight distribution of the $[5,2]$ MDS code implies that one of the code symbol in the support of the second repair group must be zero, contradicting that code symbol in a repetition code are all nonzero.

Next suppose that there are two repair groups that overlap at exactly two positions. By permuting the coordinates we may assume that their supports are $\{1,2,3,4,5\}$ and $\{4,5,6,7,8\}$. Consider a codeword that is zero in the first 5  coordinates. Because there is no codeword of weight 3 in the $[5,2,4]$ MDS code, the coordinates at positions 5, 6, and 7 must be zero. However, the code symbols from position 5 to the end of the codeword should be codeword of a repetition code by part (i) of Prop.\ref{prop:disjoint_support}. Hence this codeword is an all-zero codeword.

Similarly we can prove that no two repair groups can intersect at 3 or 4 coordinates.
\end{proof}

As an example of the above lemma, the three repair groups for the code with length 12 in Table~\ref{table:puncture24} are $\{1,2,3,4,5\}$, $\{1,6,7,9,11\}$, and $\{2,6,8,10,12\}$. Any two of them intersect at exactly one coordinates.

Using Lemma~\ref{lemma:repair_overlap1}, we can verify that it is not possible to arrange three repair groups in 11 coordinates, such that each repair group has length 5 and any two of them intersect at 1 coordinates. Likewise, we can eliminate the possibility of length $n=10$ in this case.

Thus, the code parameters of optimal $(2,4)$-LRC with dimension 3 are 
\begin{equation}
n=d+5, \ k =3, \ r=2,\ \delta =4,\ 7\leq d\leq16.
\label{LRC:r=2_delta=4_d>4}
\end{equation}

\begin{table}
\caption{Construction of Optimal $(2,4)$-LRC of Dimension 3 for $12\leq n \leq 21$.}
\label{table:puncture24}
\begin{center}
\begin{tabular}{|c|c|c|l|} \hline
Code length & Dim. & Min. Dist. &Construction \\ \hline \hline
21 & 3& 16& $C_{21}$  \\
20 & 3& 15&Puncture $C_{21}$ at position 1 \\
19 & 3& 14&Puncture $C_{21}$ at positions 1,2 \\
18 & 3& 13&Puncture $C_{21}$ at positions 11,14,20 \\
17 & 3& 12&Puncture $C_{21}$ at positions 11,13,14,20 \\
16 & 3& 11&Puncture $C_{21}$ at positions 2,4,10,12,20 \\
15 & 3& 10&Puncture $C_{21}$ at positions 10,11,14,15,19,21 \\
14 &  3& 9&Puncture $C_{21}$ at positions 9,10,11,14,15,19,21 \\ 
13 &  3& 8&Puncture $C_{21}$ at positions 10,11,13,14,16,18,19,20 \\ 
12 &  3& 7&Puncture $C_{21}$ at positions 9,10,11,14,15,16,19,20,21 \\ 
\hline
\end{tabular}
\end{center}
\end{table}

We next turn to $(3,4)$-LRC with dimension 4. We first prove a property about the repair groups, which is analogous to Lemma~\ref{lemma:repair_overlap1}.

\begin{lemma}
In an optimal quaternary $(3,4)$-LRC with dimension 4, any two repair groups are either disjoint or overlap in exactly two position.
\label{lemma:overlap_2}
\end{lemma}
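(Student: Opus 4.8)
The plan is to run the same residue-code/weight-enumerator argument that underlies Lemma~\ref{lemma:repair_overlap1}, but now driven by the weight distribution of the $[6,3,4]$ local code. We are in the regime $k=r+1=4$, $\delta=4$, with $d_{\mathrm{min}}(\cC)>q=4$, so by Prop.~\ref{prop:disjoint_support}(iii) every repair group has size $r+\delta-1=6$, and its local code has dimension $r=3$ and minimum distance $\delta=4$, hence is the quaternary $[6,3,4]$ MDS code (Prop.~\ref{prop:residue_code}(iii)). The first step is to record the MDS weight enumerator of this code: a routine computation gives $A_0=1$, $A_4=45$, $A_6=18$ and $A_1=A_2=A_3=A_5=0$. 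The only consequence I actually need is that a $[6,3,4]$ code has no codeword of weight $1,2,3$ or $5$, so every nonzero local codeword has weight $4$ or $6$; the vanishing of $A_5$ is the crucial input, and it is precisely what distinguishes this lemma (which permits disjoint groups) from Lemma~\ref{lemma:repair_overlap1}.

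Next I would invoke the residue code. Since $\lfloor(k-1)/r\rfloor=1$, a chain defining a residue code consists of a single subcode $\cD_1$ whose support is one full repair group $\cR_1$ of size $6$. Shortening $\cC$ at $\cR_1$ then produces, by Prop.~\ref{prop:residue_code}(i) together with Prop.~\ref{prop:disjoint_support}(i), an $[n-6,1,n-6]$ code equivalent to a repetition code on $[n]\setminus\cR_1$. Because its minimum distance equals its length, this code has full support, so there exists a codeword $\bc\in\cC$ that vanishes on every coordinate of $\cR_1$ and is nonzero on every coordinate of $[n]\setminus\cR_1$.

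Now let $\cR_2\neq\cR_1$ be a second repair group and put $t=|\cR_1\cap\cR_2|$. Restricting $\bc$ to $\cR_2$, it is zero on the $t$ overlap coordinates (which lie in $\cR_1$) and nonzero on the $6-t$ coordinates of $\cR_2\setminus\cR_1\subseteq[n]\setminus\cR_1$; hence $\bc|_{\cR_2}$ has Hamming weight exactly $6-t$. But $\bc|_{\cR_2}$ is a codeword of the local $[6,3,4]$ code, so its weight lies in $\{0,4,6\}$, forcing $6-t\in\{0,4,6\}$, i.e. $t\in\{0,2,6\}$. Since $\cR_1$ and $\cR_2$ are distinct and both have size $6$, we have $t\le 5$, leaving $t\in\{0,2\}$, which is exactly the disjoint-or-overlap-two conclusion.

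The argument is short precisely because the weight gap $A_5=0$ disposes of all four forbidden overlaps ($t=1,3,4,5$) at once. The point that needs care is the claim that the shortened code is genuinely a full-support repetition code, so that $\bc$ is nonzero coordinate-by-coordinate off $\cR_1$ and the restriction weight is exactly $6-t$ with no accidental cancellations; I would justify this from Prop.~\ref{prop:disjoint_support}(i) (dimension $1$) together with minimum distance $n-6$ equalling the length, which in turn rests on the hypothesis $d_{\mathrm{min}}(\cC)>q$ through Theorem~\ref{thm:defect}.
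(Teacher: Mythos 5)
Your proof is correct and takes essentially the same approach as the paper: both arguments produce, via the dimension-one (repetition) residue code, a codeword of $\cC$ vanishing on one repair group and nonzero on every coordinate outside it, then restrict it to the second repair group and invoke $A_1=A_2=A_3=A_5=0$ for the quaternary $[6,3,4]$ code to forbid overlaps of size $1,3,4,5$. The only difference is cosmetic: you dispose of all forbidden overlap sizes at once through the single computation $6-t\in\{0,4,6\}$, whereas the paper treats $t=1$ explicitly and handles $t=3,4,5$ by the same argument.
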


\begin{proof}
We utilize the weight distribution of the $[6,3,4]$ hexacode
$$
A_0 = 1,\ A_1=A_2=A_3=0, \ A_4=45,\ A_5=0,\ A_6=18. 
$$
A nonzero codeword of $[6,3,4]$ hexacode has weight equal to either 4 or~6.

Suppose we have two repair groups $\cR_1$ and $\cR_2$ in $\cC$ that intersect at exactly one position. By re-labeling the columns of the parity-check matrix, we assume without loss of generality that $\cR_1=\{1,2,3,4,5,6\}$ and $\cR_2=\{6,7,8,9,10,11\}$. 
Let $\mathbf{c}$ be a nonzero codeword that is zero in the first 6 positions. The remaining code symbols from position 7 to the end of the codeword form a codeword in the residue code. Hence, there is a codeword in the LRC that is nonzero at all positions from position 7 to the end. By the requirements of LRC, when we restrict to $\cR_2$, the punctured code  $\cC_{\cR_2}$ is a $[6,3,4]$ code. However, there is no codeword of weight 5 in $[6,3,4]$ code. This contradicts the weight distribution of the quaternary $[6,3,4]$ code.

Likewise, we can show that no two repair groups can overlap in 3, 4, or 5 positions.
\end{proof}

An example of parity-check matrix of an optimal $(3,4)$-LRC with dimension 4 and length 18 is given in Fig.~\ref{fig:18_4_12}. We denote this code by $\cC_{18}$. From online code table~\cite{MinT}, we check that the length of a quaternary code with defect 3 and dimension 4 is less than or equal to 18. Hence the code $\cC_{18}$ is indeed the longest one among all $(3,4)$-LRCs with dimension 4. 

Using computation mathematical software Sage~\cite{sagemath}, we can see that $\cC_{18}$   actually has one more repair group. The four repair groups of  $\cC_{18}$ are:
$$
\{1,2,3,4,5,6\},\ 
\{7,8,9,10,11,12\},\ 
\{13,14,15,16,17,18\},\ 
\{1,3,8,11,16,18\}.
$$
If we remove columns 13, 14, 15, and 17 in Fig.~\ref{fig:18_4_12}, the resulting matrix defines a $(3,4)$-LRC of length 14 and dimension 4. It has three repair groups
$$
\{1,2,3,4,5,6\},\ 
\{7,8,9,10,11,12\},\ 
\{1,3,8,11,16,18\}.
$$
If we remove the 6 columns 13 to 18 in Fig.~\ref{fig:18_4_12}, we will get a $(3,4)$-LRC of length 12 and dimension 4. It has two repair groups
$$
\{1,2,3,4,5,6\},\ 
\{7,8,9,10,11,12\}.
$$

We can obtain $(3,4)$-LRCs with length 16 and dimension 4 by the following parity-check matrices,
$$H_{16}= {\small \left[\begin{array}{cccccc|cccccc|cccc}
1&0&0&1&1&1& 0&0&0&0&0&0&  0&0&0&0 \\
0&1&0&1&\alpha&\beta& 0&0&0&0&0&0&  0&0&0&0 \\
0&0&1&1&\beta&\alpha& 0&0&0&0&0&0&  0&0&0&0 \\ \hline
0&0&0&0&0&0& 1&0&0&1&1&1&  0&0&0&0 \\
0&0&0&0&0&0& 0&1&0&1&\alpha&\beta&  0&0&0&0 \\
0&0&0&0&0&0& 0&0&1&1&\beta&\alpha&  0&0&0&0 \\ \hline
0&0&0&0&0&0& 0&0&0&0&1&0&  0&1&1&1 \\
0&0&0&0&0&0& 0&0&0&1&0&0&  0&1&\alpha&\beta \\
0&0&0&0&0&0& 0&0&0&0&0&0&  1&1&\beta&\alpha \\ \hline
0&0&0&1&\alpha&\alpha& 0&0&0&\alpha&\beta&0&  0&1&1&\beta \\
0&0&0&1&\beta&\beta& 0&0&0&\beta&0&\beta&  0&1&0&\alpha \\
0&0&0&\alpha&1&\beta& 0&0&0&\alpha&0&\alpha&  0&\alpha&1&\alpha
\end{array} \right]}.
$$
The last three rows correspond to global parity-check equations, while the first nine rows define the local codes. The repair groups are located at $\{1,2,3,4,5,6\}$, $\{7,8,9,10,11,12\}$, and $\{9,10,13,14,15,16\}$. This parity-check matrix defines an $(3,4)$-LRC of length 16, dimension 4 and minimum distance 10. 

In the remainder of this section we consider $(3,4)$-LRC with odd length. For length $n=17$, we can construct a $(3,4)$-LRC with dimension 4 and distance 11 by the following parity-check matrix,
$$
H_{17} = {\small  \left[\begin{array}{cccccc|cccccc|ccccc}
1&0&0&1&1&1& 0&0&0&0&0&0& 0&0&0&0&0 \\
0&1&0&1&\alpha&\beta& 0&0&0&0&0&0& 0&0&0&0&0\\
0&0&1&1&\beta&\alpha& 0&0&0&0&0&0& 0&0&0&0&0\\ \hline
1&0&0&0&0&0& 0&1&1&1&0&0& 0&0&0&0&0\\
0&1&0&0&0&0& 0&1&\alpha&\beta&0&0& 0&0&0&0&0\\
0&0&0&0&0&0& 1&1&\beta&\alpha&0&0& 0&0&0&0&0\\  \hline
1&0&0&0&0&0& 0&0&0&0&1&1& 1&0&0&0&0\\
0&0&1&0&0&0& 0&0&0&0&1&\alpha& \beta&0&0&0&0\\
0&0&0&0&0&0& 1&0&0&0&1&\beta& \alpha&0&0&0&0\\ \hline
0&0&0&0&0&0& 0&0&0&0&0&1& 0&0&1&1&1\\
0&0&0&0&0&0& 0&0&0&0&0&0& 1&0&1&\alpha&\beta\\
0&0&0&0&0&0& 0&0&0&0&0&0& 0&1&1&\beta&\alpha\\ \hline
\beta&\alpha&\alpha&\beta&1&\beta& 1&\alpha&\beta&\beta&\alpha&\alpha& \beta&\alpha&\beta&1&\beta
\end{array}\right]}.
$$
The elements of $GF(4)$ are represented by $\{0,1,\alpha,\beta\}$, where $\alpha$ and $\beta$ are the roots of $x^2+x+1$ in $GF(4)$.
The repair groups of this LRC are $\{1,2,3,4,5,6\}$, $\{1,2,7,8,9,10\}$ and $\{1,3,7,11,12,13\}$ and $\{12,13,14,15,16,17\}$. We note that two repair groups are either disjoint or overlapping on exactly two locations. The last row in the parity-check matrix $H_{17}$ is a global parity-check equation.

By Lemma~\ref{lemma:overlap_2}, for length $n=15$, there is essentially one way to place the repair groups such that each pair of them are either disjoint or overlapping in exactly two positions. After some code symbols permutations, we can write the repair groups as
$$
\{1,2,3,4,8,9\},
\{1,2,5,6,10,11\},
\{1,3,5,7,12,13\},
\{1,4,6,7,14,15\}.
$$
The four repair groups all intersect at the first code symbol. Beside the first position, each pair of repair groups intersect at one more position. Each repair group is associated with three parity-check equations. If we stack all the parity-check equations to form a $12\times 15$ matrix, the resulting matrix has the following structure,
$$ H_{15} = 
{\small 
\left[
\begin{array}{ccccccc|cccccccc}
1&1&0&0&0&0&0& 1&1&0&0&0&0&0&0\\
1&0&1&0&0&0&0& \alpha&\beta&0&0&0&0&0&0\\
1&0&0&1&0&0&0& \beta&\alpha&0&0&0&0&0&0\\ \hline
1&1&0&0&0&0&0& 0&0&1&1&0&0&0&0\\
1&0&0&0&1&0&0& 0&0&\alpha&\beta&0&0&0&0\\
1&0&0&0&0&1&0& 0&0&\beta&\alpha&0&0&0&0\\ \hline
1&0&\alpha&0&0&0&0& 0&0&0&0&1&1&0&0\\
1&0&0&0&\alpha&0&0& 0&0&0&0&\alpha&\beta&0&0\\
1&0&0&0&0&0&1& 0&0&0&0&\beta&\alpha&0&0\\ \hline
1&0&0&\alpha&0&0&0& 0&0&0&0&0&0&1&1\\
1&0&0&0&0&\alpha&0& 0&0&0&0&0&0&\alpha&\beta\\
1&0&0&0&0&0&1& 0&0&0&0&0&0&\beta&\alpha
\end{array} \right] }
$$
The rows of this matrix are linearly dependent. We can see this  by checking that the product
$$
(1,1,1,1,1,1,\beta,\beta,\beta,\beta,\beta,\beta) \cdot H_{15} 
$$
is equal to the zero vector. Hence, we can regard this matrix as a parity-check matrix and define a quaternary code with dimension~4. By the structure of this matrix, each code symbol is covered by a $[6,3]$ local code. We can check that the minimum distance is 9 by computer software~\cite{sagemath}.

For length $n=13$, we can remove the last four columns and last four rows in matrix $H_{17}$. The resulting matrix is a parity-check matrix of an $(3,4)$-LRC with length 13 and minimum distance~7.

We thus obtain the following code parameters for optimal $(3,4)$-LRC of dimension 4,
\begin{equation}
n=d+6, \ k =4, \ r=3,\ \delta =4,\ 6 \leq d \leq 12.
\label{LRC:r=3_delta=4_d>4}
\end{equation}

For length 11, it is not possible to accommodate repair groups that cover all 11 code symbols without violating the structure constraint in Lemma~\ref{lemma:overlap_2}. Hence there is no optimal $(3,4)$-LRC $\cC$ of dimension 4 and length~$11$.


\section{Optimal Quaternary $(r,\delta)$-LRC with $r>1$, $k\geq r+2$, $d>4$}
\label{sec:d>4b}

In this section we classify optimal quaternary $(r,\delta)$-LRC with minimum distance larger than 4 and $k \geq 2$. By Prop.~\ref{prop:disjoint_support}, all repair groups have constant size $r+\delta-1$, and furthermore the supports of the repair groups are mutually disjoint. 

Since an optimal LRC has $r|(k-1)$ when $d>4$, we write $k = r\sigma+1$ for some integer $\sigma$, so that $\lfloor (k-1)/r \rfloor =\sigma$. Let $\ell = n/(r+\delta-1)$ denote the number of repair groups. The case $\sigma=1$ is treated in the previous section. In this section we assume $\sigma\geq 2$. The minimum distance can be expressed as
\begin{equation}
 d= n-(r\sigma+1) - \sigma(\delta -1)+1 = (r+\delta-1)(\ell-\sigma).
\label{eq:d_k>2}
\end{equation}
The minimum distance must be a multiple of the repair group size.

\subsection{$r=2$, $\delta=3$}

By Prop.~\ref{thm:d} in Section~\ref{sec:preliminaries}, the minimum distance is upper bounded by $\delta q=12$. Hence
$$
 4(\ell-\sigma) = d\leq 12.
$$
The value of $\ell-\sigma$ is at most 3. Since we assume $d>4$ in this section, we only need to consider $\ell-\sigma=2$ or 3.

Let $H_{2\times 4}$ denote the matrix
\begin{equation}
H_{2\times 4} = \begin{bmatrix}
1 &0 & 1 & 1\\
0 & 1&\alpha^2&\alpha
\end{bmatrix}.
\label{eq:H24}
\end{equation}

A parity-check matrix of an optimal $(2,3)$-LRC in this category has the form
\begin{equation} H = \left[
\begin{array}{c}
I_\ell\otimes H_{2\times 4} \\
B
\end{array} \right],
\label{eq:H_r2_delta3}
\end{equation}
where $I_\ell\otimes H_{2\times 4}$ is a block diagonal matrix with $H_{2,4}$ in the diagonal.

\medskip

\underline{$\ell=\sigma+2$, $d=8$}

When $\ell-\sigma=2$, the submatrix $B$ at the bottom has 3 rows, because
$$
 n - k - \ell(\delta-1) = 4\ell - (2\sigma+1) - 2\ell = 2(\ell-\sigma)-1 = 3.
$$
After some row reductions, column permutations, and multiplication of columns by nonzero scalar, we can assume that $H$ has the following structure,
$$ H = {\small \left[
\begin{array}{cccc|cccc|cccc}
1&0&1&1&&&&& \\
0&1&\beta&\alpha& &&&&&\\ \hline
&&&&1&0&1&1 \\
&&&&0&1&\beta&\alpha& \\ \hline
&&&&&&&&1&0&1&1 \\
&&&&&&&&0&1&\beta&\alpha \\ \hline 
0&0&c_1&d_1& 0&0&e_1&f_1& 0&0&g_1&h_1 \\
0&0&c_2&d_2& 0&0&e_2&f_2& 0&0&g_2&h_2 \\
0&0&c_3&d_3& 0&0&e_2&f_2& 0&0&g_3&h_3
\end{array} \right] }
$$
where $\alpha$ is a root of $x^2+x+1$ in $GF(4)$ and $\beta=\alpha^2$, and $c_i$, $d_i$, $e_i$, $f_i$, $g_i$, $h_i$ are constants, for $i=1,2,3$. Because the restriction of a codeword to each repair group is a codeword in a $[4,2,3]$ MDS code, a nonzero codeword  cannot have either 1 or 2 nonzero code symbols in a repair group. In order to construct an LRC with minimum distance 8, we want to make sure that whenever a nonzero codeword has support lying within the union of two repair groups, then the Hamming weight of this codeword must be equal to~8, by excluding the possibility of weight 6 and 7.

For $j=1,2,\ldots, \ell$, let the two column vectors of dimension 3 at the bottom of the parity-check matrix under the $j$-th repair group be denoted by $\mathbf{u}_j$ and $\mathbf{v}_j$. In the above example, $\mathbf{u}_1$ is the column matrix $(c_1,c_2,c_3)^T$ and $\mathbf{v}_1$ is the column matrix $(d_1,d_2,d_3)^T$. For all $j=1,2,\ldots, \ell$, the vector $\mathbf{u}_j$ and $\mathbf{v}_j$ should both be nonzero, otherwise we will have a codeword of weight~3. Furthermore, they must be linearly independent, otherwise we will get a codeword of weight 4 whose support is a repair group. In the followings, we let $V_j$ denote the vector space spanned by $\mathbf{u}_j$ and $\mathbf{v}_j$, for $j=1,2,\ldots, \ell$.

To eliminate the possibility of weight-6 or weight-7 codeword, we consider the restriction of a codeword to a repair group. Such a restricted codeword should be a codeword of the $[4,2]$ MDS code. Let $\mathcal{L}$ denote the MDS code defined as the null space of $H_{2\times 4}$. Denote the resulting portion of the codeword by $(x,y,\phi,\theta)$. Let $U_3$ be the combinations of the third and fourth coordinates when $(x,y,\phi,\theta)$ is a nonzero codeword of weight 3 in $\mathcal{L}$.  We can express $\mathcal{U}_3$ as
$$
\mathcal{U}_3 := \{ (\phi,\theta) \in GF(4)^2:\, \exists x,y\in GF(4),\  (x,y,\phi,\theta)\in \mathcal{L} \text{ and } wt_H((x,y,\phi,\theta)=3\}.
$$
The weight distribution of the $[4,2]$ MDS code  $\mathcal{L}$ is
$$
A_0 = 1,\ A_1=A_2=0, \ A_3=12,\ A_4=3.
$$
 Hence $|\mathcal{U}_3|=12$.
For $j=1,\ldots, \ell$, let
$$
\mathcal{A}_j :=
\{\phi \mathbf{u}_j + \theta \mathbf{v}_j \in GF(4)^3:\, (\phi,\theta) \in \mathcal{U}_3 \}.
$$
If a vector in $\mathcal{A}_j$ belongs to the span $\langle \mathbf{u}_m, \mathbf{v}_m\rangle$ for some $m\neq j$, then we have a codeword of weight 6 or 7. Conversely, any codeword of weight 6 or 7 arises in this way. This gives the following design criterion: choose the vectors $\mathbf{u}_j$'s and $\mathbf{v}_j$'s so that they satisfy
\begin{enumerate}
\item[(i)] for $j=1,2,\ldots,\ell$, the two vectors $\mathbf{u}_j$ and $\mathbf{v}_j$ are  linearly independent (in particular, they are both nonzero);
\item[(ii)] for any $j\neq m \in\{1,2,\ldots,\ell\}$, the set $\mathcal{A}_j$ and the vector space $\langle \mathbf{u}_m, \mathbf{v}_m\rangle$ are mutually disjoint.
\end{enumerate}

Using the matrix $H_{2\times 4}$ in~\eqref{eq:H24} as the the parity-check matrix of the MDS code $\mathcal{L}$, the three codewords of weight 4 in $\mathcal{L}$ are
$$
(\beta,\beta,\alpha,1),\ (1,1,\beta,\alpha),\ (\alpha,\alpha,1,\beta).
$$
Furthermore,  under the assumption of Condition (i), $\mathcal{A}_j$ does not contain the zero vector in $GF(4)^3$. Hence, for each $j$, 
$$
V_j \setminus \mathcal{A}_j =
\langle \alpha \mathbf{u}_j + \mathbf{v}_j \rangle,
$$
which is a subspace in $GF(3)$ with dimension 1 containing the vector $\alpha \mathbf{u}_j+\mathbf{v}_j$. Intuitively speaking, for repair group $j$, only the nonzero vectors in $\langle \alpha \mathbf{u}_j + \mathbf{v}_j \rangle$ can be shared with other repair groups; the vectors in $\mathcal{A}_j$ are forbidden to appears in other repair groups.

The longest code with this structure can be obtained only when the vector subspace $\langle \alpha \mathbf{u}_j + \mathbf{v}_j \rangle$, for $j=1,2,\ldots, \ell$, are the same subspace. This is similar to the ``sunflower'' construction in~\cite{Gorla16, CFXHF21}. Since the sets $\mathcal{A}_j$ for $j=1,2,\ldots, \ell$ must be mutually disjoint, the number of repair groups is upper bounded by
$$
\frac{4^3-4 }{ 4^2-4}
= 60/ 12 = 5.
$$

An optimal $(2,3)$-LRC with dimension 7, distance 8 and 5 disjoint repair groups can indeed be realized. It is defined by the following parity-check matrix 
$$ {\small 
\left[
\begin{array}{cccc|cccc|cccc|cccc|cccc}
1&0&1&1& 0&0&0&0& 0&0&0&0& 0&0&0&0& 0&0&0&0 \\
0&1&\beta&\alpha& 0&0&0&0& 0&0&0&0& 0&0&0&0& 0&0&0&0 \\ \hline
0&0&0&0& 1&0&1&1& 0&0&0&0& 0&0&0&0& 0&0&0&0 \\
0&0&0&0& 0&1&\beta&\alpha& 0&0&0&0& 0&0&0&0& 0&0&0&0 \\ \hline
0&0&0&0& 0&0&0&0& 1&0&1&1& 0&0&0&0& 0&0&0&0 \\
0&0&0&0& 0&0&0&0& 0&1&\beta&\alpha& 0&0&0&0& 0&0&0&0 \\ \hline
0&0&0&0& 0&0&0&0& 0&0&0&0& 1&0&1&1& 0&0&0&0 \\
0&0&0&0& 0&0&0&0& 0&0&0&0& 0&1&\beta&\alpha& 0&0&0&0 \\ \hline
0&0&0&0& 0&0&0&0& 0&0&0&0& 0&0&0&0& 1&0&1&1 \\
0&0&0&0& 0&0&0&0& 0&0&0&0& 0&0&0&0& 0&1&\beta&\alpha \\ \hline
0&0&1&1& 0&0&1&1& 0&0&1&1& 0&0&0&1& 0&0&\alpha&0 \\
0&0&\beta&\alpha& 0&0&\beta&\alpha& 0&0&1&1& 0&0&1&\beta& 0&0&0&\beta \\
0&0&0&\alpha& 0&0&\beta&\beta& 0&0&\alpha&1& 0&0&1&1& 0&0&0&\alpha 
\end{array} \right]}.
$$

By puncturing the longest code in this category, we can obtain shorter codes. Their parameters are
\begin{equation}
n=4\ell, \ k =2\ell-3, \ r=2,\ \delta =3,\ d=8\ (\ell=3,4,5).
\label{LRC:r=2_delta=3_large1}
\end{equation}

\underline{$\ell=\sigma+3$, $d=12$}

When $\ell=\sigma+3$, a parity-check matrix can be written as in \eqref{eq:H_r2_delta3}, where $B$ is a submatrix with 5 rows. We continue with the notation in the previous subsection, but the column vectors $\mathbf{u}_j$ and $\mathbf{v}_j$ are now 5-dimensional. As in the previous case, the vectors $\mathbf{u}_j$ and $\mathbf{v}_j$ are linearly independent. We let $V_j$ denote the vector space spanned by $\mathbf{u}_{j}$ and $\mathbf{v}_{j}$, for $j=1,2,\ldots, \ell$. In order to prevent a codeword with weight 8, the intersection of $V_j$ and $V_m$ should be $\{\mathbf{0}\}$ whenever $j\neq m$.

We continue with the notation as in the previous case, except that the set $\mathcal{A}_j$, for $j=1,2,\ldots, \ell$, is re-defined as a subset of vectors in $GF(4)^5$,
$$
\mathcal{A}_j := \{ \phi \mathbf{u}_j + \theta \mathbf{v}_j\in GF(4)^5:\, (\phi,\theta)\in\mathcal{U}_3\}.
$$

We list the requirement on $\mathbf{u}_j$ and $\mathbf{v}_j$ below:
\begin{enumerate}
\item[(i)] for $j=1,2,\ldots,\ell$, the vector space $V_j$ has dimension 2;
\item[(ii)] for any $j\neq m \in\{1,2,\ldots,\ell\}$, the intersection of $V_j$ and $V_m$ is $\{\mathbf{0}\}$;
\item[(iii)] for any three distinct $j_1,j_2,j_3\in\{1,2,\ldots, \ell\}$, the set $\mathcal{A}_{j_1}$ of vectors is not contained in the sum space of $V_{j_2}$ and $V_{j_3}$.
\end{enumerate}

The second condition ensures that any codeword with support confined within two repair groups must be the zero codeword. The third condition guarantees that if
a codeword with support confined in three repair groups, then all 12 code symbols in the three repair groups must be all nonzero.

With the same argument as in the previous case, the set $\mathcal{A}_j$ does not contain the zero vector. The set difference $V_j\setminus\mathcal{A}_j$ is a 1-dimensional subspace in $GF(4)^5$, spanned by the vector $\alpha\mathbf{u}_j+\mathbf{v}_j$. We let $\mathbf{w}_j$ denote the vector $\alpha\mathbf{u}_j+\mathbf{v}_j$. The vectors $\mathbf{w}_j$ and $\mathbf{u}_j$ form a basis of $V_j$, with $\mathbf{w}_j\in V_j\setminus\mathcal{A}_j$ and $\mathbf{u}_j\in\mathcal{A}_j$. 



One method for constructing optimal LRCs in this category is to put all vectors $\mathbf{w}_j$'s in a subspace of dimension 2, so that any three of them are linearly dependent, corresponding to a codeword of weight~12. Meanwhile, the sets of vectors $\mathcal{A}_j$, for $j=1,2,\ldots, \ell$, should all lie outside this 2-dimensional plane, so that no vector in $\mathcal{A}_j$ lies on this special 2-dimensional subspace. As each repair group occupies a distinct line in this subspace of dimension 2, largest number of repair groups obtained by this method is upper bounded by
$$
 \ell \leq \frac{|GF(4)|^2-1}{|GF(4)|-1} = 15/3 =  5.
$$

Such a $(2,3)$-LRC with 5 repair groups and minimum distance 12 exists and can be constructed from the following parity-check matrix,
$$ H= {\small \left[
\begin{array}{cccc|cccc|cccc|cccc|cccc}
1&0&1&\alpha&&&&&&&&&&&&& \\
0&1&1&\beta &&&&&&&&&&&& \\ \hline
&&&&1&0&1&\alpha&&&&&&&&& \\
&&&&0&1&1&\beta &&&&&&&&&\\ \hline
&&&&&&&&1&0&1&\alpha&&&&& \\
&&&&&&&&0&1&1&\beta &&&&& \\ \hline
&&&&&&&&&&&&1&0&1&\alpha& \\
&&&&&&&&&&&&0&1&1&\beta & \\ \hline
&&&&&&&&&&&&&&&&1&0&1&\alpha \\
&&&&&&&&&&&&&&&&0&1&1&\beta \\ \hline
0&0&0&1& 0&0&1&1& 0&0&\beta&\alpha& 0&0&\alpha&\beta& 0&0&\beta&\alpha \\
0&0&1&1& 0&0&1&0& 0&0&\alpha&\beta& 0&0&\beta&1& 0&0&\alpha&1 \\
0&0&\beta&\beta& 0&0&\beta&\beta& 0&0&\alpha&\alpha& 0&0&\alpha&\alpha& 0&0&1&1 \\
0&0&\alpha&\alpha& 0&0&\alpha&\alpha& 0&0&\beta&\beta& 0&0&\alpha&\alpha& 0&0&1&1 \\
0&0&\beta&\beta& 0&0&\alpha&\alpha& 0&0&\beta&\beta& 0&0&\beta&\beta& 0&0&\beta&\beta
\end{array} \right]}.
$$

An optimal LRC with four repair groups can be obtained by puncturing one repair group from the above LRC. The parameters of $(2,3)$-LRC with minimum distance 12 that we can construct are
\begin{equation}
n=4\ell, \ k =2\ell-5, \ r=2,\ \delta =3,\ d=12\ (\ell=4,5).
\label{LRC:r=2_delta=3_large2}
\end{equation}

\subsection{$r=3$, $\delta=3$}

As in the case $(r,\delta)=(2,3)$, the minimum distance when $r=3$ and $\delta=3$ is upper bounded by $12$. Together with \eqref{eq:d_k>2},
$$
d = (r+\delta-1)(\ell-\sigma) = 5(\ell-\sigma)
$$
The difference $\ell-\sigma$ is equal to either 1 or 2.

\underline{$\ell=\sigma+1$, $d=5$} 

In terms of the variable $\ell=\sigma+1$, the dimension $k$ of an optimal LRC in this case is given by
$$
k = 3\sigma+1 = 3(\ell-1)+1 = 3\ell-2.
$$
The possible code parameters are
\begin{equation}
n=5\ell, \ k =3\ell-2, \ r=3,\ \delta =3,\ d=5\ (\ell\geq 3).
\label{LRC:r=3_delta=3_large1}
\end{equation}

Indeed, all of the above code parameters can be realized by the generalized tensor product construction. Let $H_1$ be the top submatrix of the $3\times 5$ matrix
$$ M= \left[
\begin{array}{ccccc}
1&1&1&1 &0\\
0&1&\alpha&\alpha^2&1 \\ \hline
0&1&\alpha^2&\alpha&0 \\
0&1&1&1&1 
\end{array} \right]
$$
and $H_2$ be the submatrix at the bottom. The matrix $H_1$ is a parity-check matrix of a $[5,2,3]$ MDS code and the matrix $M$ above is a parity-check matrix of a $[5,1,4]$ MDS code. Using the generalized tensor product construction, the code $\cC(\ell,H_1,\mathbf{1}_\ell, H_2)$
is a $(3,3)$-LRC with length $5\ell$, dimension $3\ell-2$ and minimum distance 5, for all $\ell \geq 2$.

\medskip

\underline{$\ell=\sigma+2$, $d=10$}

The existence of optimal LRC in this category is open, but we know the structure of any optimal code if it exists.

 Since we consider $k\geq r+2$ in this section, the smallest dimension in this case is $k=2r+1= 7$, and there are at least $\ell=4$ repair groups. We will prove that that there is no $(3,3)$-LRC with $d=10$ and 4 repair groups. It then follows that $(3,3)$-LRC with $d=10$ and larger than 4 repair groups does not exist.

Let $H_{2\times 5}$ be the matrix
\begin{equation}
H_{2\times 5} := 
\begin{bmatrix}
1&0&1&1&1 \\
0&1&1&\alpha&\alpha^2
\end{bmatrix}.
\label{eq:H_25}
\end{equation}
The corresponding generator matrix is
\begin{equation}
G_{3\times 5} := 
\begin{bmatrix}
1&1&1&0&0 \\
1&\alpha&0&1&0\\
1&\alpha^2&0&0&1
\end{bmatrix}.
\label{eq:G_35}
\end{equation}

Suppose $\cC$ is an optimal $(3,3)$-LRC with minimum distance 10, defined by the following parity-check matrix,
\begin{equation}
H= \left[
\begin{array}{cccc}
H_{2\times 5} & & \\
&H_{2\times 5}  & \\
&&H_{2\times 5}  \\
&&&H_{2\times 5} \\ \hline
M_1 & M_2 & M_3 & M_4
\end{array} \right].
\label{eq:H_10}
\end{equation}
The matrix $M_j$, for $j=1,2,\ldots, 4$, is a $5\times 5$ matrix in which the first and second columns are zero,
$$
M_j = \begin{bmatrix}
0&0&u_1&v_1&w_1\\
0&0&u_2&v_2&w_2\\
0&0&u_3&v_3&w_3\\
0&0&u_4&v_4&w_4\\
0&0&u_5&v_5&w_5
\end{bmatrix}.
$$

The null space of $H_{2\times 5}$ is a $[5,3,3]$ MDS code, whose weight distribution is
$$
A_0=1,\ A_1=A_2=0,\ A_3 =30, \ A_4= 15,\ A_5=18.
$$

Let $\mathbf{u}_j$, $\mathbf{v}_j$ and $\mathbf{w}_j$ be the third, fourth and the fifth columns, respectively, in the matrix $M_j$, for $j=1,2,3,4$. Using the argument as in the case $(r,\delta)=(2,3)$, the three vectors $\mathbf{u}_j$, $\mathbf{v}_j$ and $\mathbf{w}_j$ must be linearly independent. 
Let $V_j$ denote the vector subspace spanned by  $\mathbf{u}_j$, $\mathbf{v}_j$ and $\mathbf{w}_j$, for $j=1,2,3,4$. 

Because $V_1$ and $V_2$ are 3-dimensional subspaces in  a 5-dimensional vector space, the dimension of their intersection $V_1\cap V_2$ cannot be the zero vector space. The dimension of $V_1\cap V_2$ should equal 1, 2 or 3. If $\dim(V_1\cap V_2)$ equals 2 or 3, we have two linearly independent codewords of weight 10 whose supports fall within  the first two repair groups. An appropriate linear combination of them is a codeword of weight less than 10, contradicting the assumption that $\mathcal{C}$ has minimum distance is 10. 

As a result, we mus have $\dim(V_i\cap V_j)=1$ for all distinct indices $i$ and $j$. It means that given any two repair groups, we can find a codeword of weight 10 supported in these two repair groups. 

However, we are not able to find any optimal $(3,3)$-LRC with dimension 7 and minimum distance 10. The best code that we can construct has minimum distance 9. 
The following parity-check matrix is an example of $(3,3)$ LRC with length  $n=20$, dimension $k=7$, and minimum distance $d = 9$.
$$
H= {\small  \left[
\begin{array}{ccccc|ccccc|ccccc|ccccc}
1&0&1&1&1& 0&0&0&0&0& 0&0&0&0&0& 0&0&0&0&0 \\
0&1&1&\alpha&\beta& 0&0&0&0&0& 0&0&0&0&0& 0&0&0&0&0 \\ \hline
        0&0&0&0&0& 1&0&1&1&1& 0&0&0&0&0& 0&0&0&0&0 \\
        0&0&0&0&0& 0&1&1&\alpha&\beta& 0&0&0&0&0& 0&0&0&0&0 \\ \hline
        0&0&0&0&0& 0&0&0&0&0& 1&0&1&1&1& 0&0&0&0&0 \\
        0&0&0&0&0& 0&0&0&0&0& 0&1&1&\alpha&\beta& 0&0&0&0&0 \\ \hline
        0&0&0&0&0& 0&0&0&0&0& 0&0&0&0&0& 1&0&1&1&1 \\
        0&0&0&0&0& 0&0&0&0&0& 0&0&0&0&0& 0&1&1&\alpha&\beta \\ \hline
        0&0&1&1&1& 0&0&1&1&1& 0&0&1&1&1& 0&0&1&1&1 \\
        0&0&\beta&\beta&\alpha& 0&0&\beta&\beta&\alpha& 0&0&1&1&\beta& 0&0&1&\beta&1 \\
        0&0&1&0&\alpha& 0&0&\alpha&1&\alpha& 0&0&\alpha&\alpha&1& 0&0&\alpha&\alpha&1 \\
        0&0&1&\beta&1& 0&0&1&0&\alpha& 0&0&0&1&1& 0&0&\alpha&1&\alpha \\
        0&0&\alpha&\beta&0& 0&0&\beta&\beta&\alpha& 0&0&\alpha&\beta&0& 0&0&1&1&\beta 
\end{array} \right]}
$$
where $\alpha$ and $\beta$ are the roots of $x^2+x+1$ in $GF(4)$. There are 108 codewords of weight 9 in this LRC. The support of each of them is spread over three repair groups ($3+3+3$).

We conjecture that it is not possible to have minimum distance 10 in this case, and the largest minimum distance that we can get is $d=9$, but we do not have a proof yet.

\subsection{$r=2$, $\delta=4$}

From Prop~\ref{prop:disjoint_support}, the code symbols are covered by disjoint repair groups, and each repair group has size 5. 

Recall that there is no codeword of weight 5 in a $[5,2,4]$ MDS code. This means that if we restrict the LRC to a repair group, the local code is either the all zero codeword of length 5, or a codeword of Hamming weight 4. Regardless of the number of repair groups, a nonzero codeword in the residue code should not have any zero component. This is not compatible with the weight distribution of the $[5,2,4]$ MDS code. Therefore, there is no optimal LRC with $r=2$, $\delta=4$ and $q>4$.

\subsection{ $r=3$, $\delta=4$} 

Since the minimum distance is upper bounded by 16, from \eqref{eq:d_k>2}, we get
$$
d = (r+\delta-1)(\ell-\sigma) = 6(\ell-\sigma) \leq 16.
$$
The value of $\ell-\sigma$ is equal to either 1 or 2. 

\medskip

\underline{$\ell=\sigma+1$, $d=6$}

The dimension $k$ of an optimal LRC is
$$
k = 3\sigma+1 = 3(\ell-1)+1 = 3\ell-2.
$$
The possible code parameters are
\begin{equation}
n=6\ell, \ k =3\ell-2, \ r=3,\ \delta =4,\ d=6\ (\ell\geq 3).
\label{LRC:r=3_delta=4_large1}
\end{equation}

All of the above code parameters can be realized by the generalized tensor product construction. Consider the $3\times 6$ matrix
$$ M= \left[
\begin{array}{cccccc}
1&0&0 &1&1&1 \\
0&1&0 &1&\alpha&\alpha^2 \\ 
0&0&1 &1&\alpha^2&\alpha \\ \hline
1&\alpha&\alpha^2&1&\alpha&\alpha^2 \\
1&\alpha^2&\alpha^2&\alpha&\alpha&1
\end{array} \right]
$$
Let $H_1$ be the submatrix consisting of the first three rows in $M$, and 
$H_2$ be the submatrix consisting of the last two rows.
The matrix $H_1$ is a parity-check matrix of a $[6,3,4]$ code. The second submatrix $H_2$ is chosen such that $M$ is the parity-check matrix of a $[6,1,6]$ code.

 Using the generalized tensor product construction, the code $\cC(\ell,H_1,\mathbf{1}_\ell, H_2)$
is a $(3,4)$-LRC with length $6\ell$, dimension $3\ell-2$ and minimum distance 6, for all $\ell \geq 2$.

\medskip

\underline{$\ell=\sigma+2$, $d=12$}

In the last case we need a structural property of the quaternary $[6,3,4]$ MDS code. 

\begin{prop}
If we pick two codewords of weight 6 in the $[6,3,4]$ MDS code that are not scalar multiple of each other, then the sum of them has weight~4.
\label{prop:MDS634}
\end{prop}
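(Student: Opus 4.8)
The plan is to work entirely inside the two-dimensional subcode $D := \langle c_1, c_2\rangle$ spanned by the two given weight-$6$ words, and to exploit the weight distribution $A_0=1$, $A_4=45$, $A_6=18$ recorded in the proof of Lemma~\ref{lemma:overlap_2}. Since $c_1$ and $c_2$ are not scalar multiples of one another, $D$ has dimension $2$ and $|D|=16$. Because $c_1+c_2$ is a nonzero codeword of the hexacode, its weight already lies in $\{4,6\}$; the whole task is to rule out weight~$6$.

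The first step is to double count $\sum_{c\in D\setminus\{0\}} \mathrm{wt}(c)$. For each coordinate $i\in[6]$, evaluation $c\mapsto c_i$ is a linear functional on $D$. Since $c_1$ has weight $6$, it is nonzero in coordinate $i$, so this functional is not identically zero; hence its kernel is a one-dimensional subspace of $D$, containing exactly $3$ nonzero words. Thus exactly $15-3=12$ of the $15$ nonzero words of $D$ are nonzero in coordinate $i$, and summing over the six coordinates gives $\sum_{c\in D\setminus\{0\}}\mathrm{wt}(c)=6\cdot 12 = 72$.

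The second step converts this into a count of weight-$6$ words. Writing $n_4$ and $n_6$ for the number of nonzero words of $D$ of weight $4$ and $6$, the weight constraint gives $n_4+n_6=15$ and $4n_4+6n_6=72$, whence $n_6=6$. But the three nonzero scalar multiples of $c_1$ together with the three of $c_2$ are six distinct weight-$6$ words of $D$, so they are \emph{all} of them. As $c_1+c_2$ is not a scalar multiple of $c_1$ or of $c_2$ (by linear independence), it is not among these six, and therefore has weight $4$, as claimed.

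The computation itself is routine; the one idea that makes it work is the decision to pass to the subcode $D$ and balance the total weight in two ways. I expect no serious obstacle, the only point needing a moment's care being the verification that every coordinate functional on $D$ is nonzero, which is exactly where the hypothesis $\mathrm{wt}(c_1)=6$ (no zero coordinate) enters. An alternative, equally short route is geometric: identify the six columns of a generator matrix with a hyperoval in $PG(2,4)$, note that $c_1,c_2$ correspond to the two external lines through their common point, and observe that the pencil of lines through that point contains exactly two external lines, forcing the third line, dual to $c_1+c_2$, to be a secant and hence $\mathrm{wt}(c_1+c_2)=4$.
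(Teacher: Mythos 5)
Your proof is correct, and it takes a genuinely different route from the paper's. The paper disposes of this proposition by invoking the uniqueness of the quaternary $[6,3,4]$ MDS code (citing Alderson) and then verifying the claim directly on one explicit generator matrix --- in effect a finite check outsourced to a citation and a computation. You instead work intrinsically inside the two-dimensional subcode $D=\langle c_1,c_2\rangle$: since $c_1$ has no zero coordinate, every coordinate functional is nonzero on $D$, so each coordinate kills exactly $3$ of the $15$ nonzero words, giving total weight $6\cdot 12=72$; combined with $n_4+n_6=15$ and $4n_4+6n_6=72$ this forces $n_6=6$, and these six weight-$6$ words are exhausted by the nonzero scalar multiples of $c_1$ and $c_2$, so $c_1+c_2$ (which is nonzero and proportional to neither) must have weight $4$. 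What your argument buys is independence from both the uniqueness theorem and any machine verification: it applies verbatim to \emph{any} code over $GF(4)$ with the hexacode's weight distribution (which, for MDS parameters, is forced), and it is checkable by hand in a few lines. What the paper's approach buys is brevity and robustness against slips --- once uniqueness is granted, a single explicit check settles the matter with no room for a counting error. Your geometric remark (the two codewords as the two external lines of the hyperoval through their common point, with every further line of that pencil a secant) is also sound and is essentially the projective-geometry shadow of the same count, since a hyperoval in $PG(2,4)$ admits exactly two external lines through each point off it.
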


\begin{proof}
It is know that the $[6,3,4]$ MDS code is unique~\cite{Alderson06}. We just need to pick any $3\times 6$ matrix that generates the $[6,3,4]$ MDS code, and verify the proposition.
\end{proof}

Recall that the weight distribution of the $[6,3,4]$ MDS code is
$$
A_0=1,\ A_1=A_2=A_3=0, \ A_4=45,\ A_5=0,\ A_6=18.
$$
If a codeword has the nonzero code symbols spread over three or more repair groups, then the Hamming weight of the codeword is at least 12. Hence, we only need to consider codewords supported within two repair groups. When a nonzero codeword has support inside the union of two repair groups, then all code symbols in the two repair groups must be nonzero. We have to exclude the possibility of $4+4$, $4+6$, or $6+4$ nonzero symbols. 

Since the quaternary  $[6,3,4]$ code is unique there is no loss in generality if we take
$$
H_{3\times 6}= \begin{bmatrix}
1&0&0&1&1&1 \\
0&1&0&1&\alpha&\alpha^2 \\
0&0&1&1&\alpha^2&\alpha
\end{bmatrix}
$$
as a parity-check matrix. Since the repair groups must be disjoint (by Prop.~\ref{prop:disjoint_support}), we can write down a parity-check matrix of a $(3,4)$-LRC with distance 12 in the form,
$$ H = \left[
\begin{array}{cccc}
H_{3\times 6} &&&\\
&H_{3\times 6} &&\\
&&H_{3\times 6} &\\
&&&H_{3\times 6} \\ \hline
M_1 & M_2 & M_3 & M_4
\end{array} \right]
$$
where $M_j$ is a $5\times 6$ matrix whose columns 1, 2 and 3 are zero vectors,
$$
M_j = \begin{bmatrix}
0&0&0&u_1&v_1&w_1 \\
0&0&0&u_2&v_2&w_2 \\
0&0&0&u_3&v_3&w_3 \\
0&0&0&u_4&v_4&w_4 \\
0&0&0&u_5&v_5&w_5 
\end{bmatrix}.
$$

Suppose $\cC$ is an optimal LRC in this case consisting of $\ell$ repair groups. For $j=1,2,\ldots \ell$, let $\mathbf{u}_j$, $\mathbf{v}_j$ and $\mathbf{w}_j$ be the three columns vectors in the submatrix $M_j$. As the there is no codeword with Hamming weight 6 or less, the three vectors $\mathbf{u}_j$, $\mathbf{v}_j$ and $\mathbf{w}_j$ are linearly independent.

Let $\mathcal{L}$ be the null space of $H_{3\times 4}$, and for $a=4$ and $a=6$, define
$$
\mathcal{U}_a := \{\mathbf{c}\in \mathcal{L}:\,  wt_H(\mathbf{c})=a\}_{\{4,5,6\}}.
$$
The set consists of the restriction of the codewords with weight $a$ on the last three components.

For $j=1,2,\ldots, \ell$, let $V_j$ be the vector subspace spanned by $\mathbf{u}_j$, $\mathbf{v}_j$ and $\mathbf{w}_j$, and let
$$
\mathcal{A}_j^{(a)} := \{\phi \mathbf{u}_j+\theta \mathbf{v}_j+\tau\mathbf{w}_j:\, (\phi,\theta,\tau)\in\mathcal{U}_a\},
$$
for $a=4, 6$. The set $\mathcal{A}_j^{(a)}$ is closed under scalar multiplication, but need not be closed under addition. The set $\mathcal{A}_j^{(6)}$ can be interpreted as the union of 6 one-dimensional subspaces in $GF(4)^5$, excluding the origin.

Since $\dim(V_j)=3$ for all $j$ and $V_j$ is living in a vector space of dimension 5, two vector subspaces $V_j$ and $V_m$ associated to two repair groups $j$ and $m$ (with $j\neq m$) must intersect in a subspace of dimension at least 1, because
$$
\dim(V_j \cap V_m)  = \dim(V_j)+\dim(V_m) - \dim(V_j + V_m) \geq 3 + 3- 5= 1.
$$
As the minimum distance is 12, any nonzero vector in $V_j\cap V_m$ must be in $A_j^{(6)}\cap A_m^{(6)}$. However, the intersection cannot contain two linearly independent vectors. We claim that $\dim(V_j\cap V_m)=1$ for all $j\neq m$.

Suppose on the contrary that
\begin{align*}
\mathbf{c}&=(x_1,y_1,z_1,\phi_1,\theta_1,\tau_1,\ x_2,y_2,z_2,\phi_2,\theta_2,\tau_2, \ldots) \\
\mathbf{c}' &=(x_1',y_1',z_1',\phi_1',\theta_1',\tau_1',\ x_2',y_2',z_2',\phi_2',\theta_2',\tau_2', \ldots)
\end{align*}
are two codewords whose nonzero components are all in the first two repair groups, and the two vectors
$$
\phi_1\mathbf{u}_1+\theta_1\mathbf{v}_1+\tau_1\mathbf{w}_1, \
\phi_1'\mathbf{u}_1+\theta_1'\mathbf{v}_1+\tau_1'\mathbf{w}_1
$$
are linearly independent vector in $V_1$. Then 
$$
(x_1,y_1,z_1,\phi_1,\theta_1,\tau_1) \text{ and }
(x_1',y_1',z_1',\phi_1',\theta_1',\tau_1')
$$
are two codewords of weight 6 in the $[6,3,4]$ code that are not scalar multiple of each other, and likewise,
$$
(x_2,y_2,z_2,\phi_2,\theta_2,\tau_2) \text{ and }
(x_2',y_2',z_2',\phi_2',\theta_2',\tau_2')
$$
are another pair of codewords of weight 6 that are not scalar multiple of each other. Then by Prop.~\ref{prop:MDS634}, the sum of $\mathbf{c}$ and $\mathbf{c}'$ is a codeword of weight 8 in the LRC, contradicting the assumption on minimum distance. This finish the proof of the claim.

We have thus proved that for any two repair groups, say repair groups $j$ and $m$, the intersection of the respective vector subspaces must intersect in a one-dimensional subspace spanned by a vector in $\mathcal{A}_j^{(6)}\cap\mathcal{A}_m^{(6)}$.

We have the following upper bound on the number of repair groups.

\begin{prop}
An optimal $(3,4)$-LRC with minimum distance 12 has at most 21 repair groups.
\end{prop}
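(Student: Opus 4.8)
The plan is to translate the statement into a question about a family of planes in $PG_4(4)$ and to bound its size using the hyperoval structure of the admissible intersection points. Recall what has already been set up: each repair group $j$ contributes a $3$-dimensional subspace $V_j=\langle \mathbf{u}_j,\mathbf{v}_j,\mathbf{w}_j\rangle$ of $GF(4)^5$, and for $j\neq m$ the hypothesis $d=12$ forces $\dim(V_j\cap V_m)=1$ with the common line lying in $\mathcal{A}_j^{(6)}\cap\mathcal{A}_m^{(6)}$. Thus the $\ell$ planes pairwise meet in exactly one projective point, and on each $V_j\cong PG_2(4)$ the admissible intersection points are confined to the six points of $\mathcal{A}_j^{(6)}$. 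The first step is to show that these six points form a hyperoval of $PG(V_j)$, i.e. no three are collinear: a projective point of $V_j$ corresponds to a weight-$6$ word precisely when it avoids the six lines dual to the columns of the $[6,3,4]$ generator matrix, and in $PG_2(4)$ the six points lying off a dual hyperoval are themselves a hyperoval. This may be verified once and for all using the uniqueness of the quaternary hexacode already invoked in Prop.~\ref{prop:MDS634}.

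With this in hand I would fix one plane, say $V_1$, and sort the remaining planes by which of the six hyperoval points of $V_1$ they pass through. If several planes pass through a common point $p\in\mathcal{A}_1^{(6)}$, then, since any two of them already meet in a single point and both contain $p$, they meet exactly at $p$; passing to the quotient $GF(4)^5/\langle p\rangle\cong GF(4)^4$ turns them into $2$-dimensional subspaces that pairwise meet trivially, i.e. a partial spread of lines in $PG_3(4)$, of size at most $q^2+1=17$. The same picture dualises cleanly: the orthogonals $U_j=V_j^{\perp}$ are pairwise skew lines in $PG_4(4)$ (since $V_j+V_m=W$), and the hyperoval condition says that for each $j$ the hyperplanes $\langle U_j,U_m\rangle$ lie among six prescribed members of the pencil of hyperplanes through $U_j$.

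The heart of the argument is to upgrade this local partial-spread estimate to the sharp value $q^2+q+1=21$. The naive combination ``six hyperoval points, at most $17$ planes through each'' is far too lossy, so the plan is to exploit the global rigidity of the interlocking hyperovals rather than each point in isolation. Concretely, I would set up the incidence structure whose points are the pairwise intersection points and whose blocks are the planes $V_j$, each block carrying exactly the hyperoval $\mathcal{A}_j^{(6)}$, and argue that the no-three-collinear property propagates across blocks so that this structure embeds into the points and lines of a projective plane of order $q$; counting its points then yields $\ell\le q^2+q+1$. This would mirror the way the extremal $(2,4)$-code $\cC_{21}$ realises $PG_2(4)$ as its repair-group design.

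The step I expect to be the main obstacle is exactly this last one: ruling out the looser configurations that the pairwise-intersection and partial-spread constraints alone would permit, since those constraints by themselves only bound $\ell$ by a quantity of order $q^3$. Making the jump down to $q^2+q+1$ appears to require the precise geometry of the hexacode hyperoval together with the uniqueness of the $[6,3,4]$ code, so I anticipate that a careful analysis of how three mutually intersecting planes may share hyperoval points — or, failing a fully synthetic argument, a finite verification in the spirit of the other Sage-assisted computations in this paper — will be needed to close the bound.
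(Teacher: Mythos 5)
Your proposal sets up a genuinely different framework from the paper's, but it is not a proof: by your own account the decisive step is missing. The pairwise-intersection geometry you develop (planes $V_j$ in $GF(4)^5$ meeting pairwise in single points confined to the six projective points of $\mathcal{A}_j^{(6)}$, partial spreads of size at most $17$ in the quotient through a common point) is all correct as far as it goes, but as you concede it only yields a bound of order $q^3$, and the proposed rescue --- embedding the incidence structure of intersection points and planes into a projective plane of order $4$ so as to count $\ell\leq q^2+q+1$ --- is stated as a hope, with no argument that such an embedding exists. Ruling out $18\leq\ell\leq 21$ is in fact open in the paper, so any mechanism that forces the configuration to be exactly a projective plane would prove far more than the proposition; nothing in your outline supplies that mechanism, and I see no reason to expect it to be true.

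The ingredient you are missing is that the proposition does not need the intersection geometry at all: it follows from a packing count on the \emph{weight-4} sets, which your proposal never uses. The paper observes that the sets $\mathcal{A}_j^{(4)}$ (of size $A_4=45$ each) must be pairwise disjoint, since a common vector would produce a codeword of weight $4+4=8<12$. Hence $45\ell\leq 4^5-1=1023$, giving $\ell\leq 22$. To eliminate $\ell=22$: the sets $\mathcal{A}_j^{(4)}$ would then occupy $22\cdot 45=990$ nonzero vectors, leaving only $33$ vectors available to contain every $\mathcal{A}_j^{(6)}$; but already for two repair groups $j\neq m$ one has $|\mathcal{A}_j^{(6)}\cup\mathcal{A}_m^{(6)}|=18+18-3=33$ (they share exactly the three nonzero vectors of the one-dimensional space $V_j\cap V_m$), so a third repair group's $\mathcal{A}^{(6)}$ has no room --- contradiction. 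This two-step count gives $\ell\leq 21$ directly. The lesson: the $45$-element sets, not the $18$-element ones, carry the quantitative weight, and the ``interlocking hyperoval'' structure you were trying to exploit is only needed for the residual tie-breaking at $\ell=22$, where it enters through a trivial cardinality computation rather than any projective-plane rigidity.
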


\begin{proof}
Suppose $\cC$ is an optimal $(3,4)$-LRC with minimum distance 12 and $\ell$ disjoint repair groups. The  vectors in $\cup_{j=1}^\ell \mathcal{A}_j^{(4)}$ must be distinct. If not, we will have a codeword of weight 8.  
Since the ambient vector space contains $4^5=1024$ vectors, the number of repair groups is no larger than $(1024-1)/45 = 22.733$. Hence, there are no more than 22 repair groups. We next show that $\ell=22$ is impossible.

Suppose there are $\ell=22$ repair groups. The union of $\mathcal{A}_j^{(4)}$, for $j=1,2,\ldots, 22$, occupy 990 nonzero vectors in $GF(4)^5$. The sets $\mathcal{A}_j^{(6)}$ belong to a set of size $1023-990=33$. However, for any two repair groups, say repair groups $j$ and $m$, the total number of vectors in $\mathcal{A}_j^{(6)} \cup \mathcal{A}_m^{(6)}$ is  $18+18-3 = 33$. There is no room for the third repair group. The number of repair groups is at most 2. This certainly contradicts the assumption that there are 22 repair groups.

\end{proof}

There is a construction for $\ell=17$ repair groups. In this construction we assume that there is a vector that is contained in $\mathcal{A}_j^{(6)}$ for all $j=1,2,\ldots, \ell$. As $\mathcal{A}_j^{(6)}$ is closed under scalar multiplication, there is a one-dimensional subspace, say $W$, in $GF(4)^5$, such that $W$ is contained in $V_j$ for all $j$. Consider the quotient fields $GF(4)^5/W$. For each $j$, the quotient fields $V_j/W$ for $j=1,2,\ldots \ell$, only intersect at the zero vector $\{\mathbf{0}\}$. Indeed there is a spread of size 17. We can use this spread to form an LRC consisting of 17 repair groups.

The code parameters are
\begin{equation}
n=6\ell, \ k =3\ell-5, \ r=3,\ \delta =4,\ d=12\ (3 \leq \ell\leq 17 ?).
\label{LRC:r=3_delta=4_large2}
\end{equation}

It is not known whether $(3,4)$-LRC of minimum distance 12 exists or not when then number of repair groups is between 18 and 21.

\section{Concluding Remarks}
\label{sec:conclusion}

In this paper we classify quaternary $(r,\delta)$ LRC with $\delta>2$. With three exceptions, all possible combinations of code parameters are found, and for each combination of code parameters we provide an explicit code construction. 
The result of the classification is shown in Table.~\ref{table:code_parameters}. The codes can be divided into two groups. The first group has locality $r=1$. The code length and minimum distance can get arbitrarily large in this group. In the second group, the locality parameter $r$ is larger than or equal to~2, and the minimum distance is bounded between 3 and~12. For minimum distance $d=3$ or $d=4$, we have a sequence of optimal LRCs for each combination of $(r,\delta)$. For minimum distance $d>4$, we have infinitely many optimal LRCs of increasing lengths only when $r=3$, $\delta\in\{3,4\}$ and $k>r+1$. In the remaining combinations of $r$ and $\delta$, there are either no or finitely many LRCs that are Singleton-optimal.

The first unsettled case in the classification is $(3,3)$-LRC with $\ell=\sigma+2$, where $\ell$ is the number of repair groups and $\sigma=\lfloor k/r \rfloor-1$.  The largest possible minimum distance is 10 by the Singleton-type bound. Nevertheless, due to the special weight distribution of the quaternary $[5,3]$ MDS code, we conjecture that minimum distance 10 is not achievable. The second open case is $(2,3)$-LRC with $\ell=\sigma+3$ repair groups and minimum distance~12. We give a construction for 5 repair groups, but it is now known whether longer code exists. The last unsettled case is $(r,\delta)=(3,4)$ and $d=12$. We prove that the repair groups are disjoint and the there are at most 21 repair groups. We can construct a $(3,4)$-LRC with 17 repair groups and minimum distance 12. Whether we can extend it to 18 to 21 repair groups is an open question.


\appendices

\section{Proof of Propositions~\ref{prop:chain} and ~\ref{prop:residue_code}}
\label{app:A}

\begin{proof}[Proof of Prop.~\ref{prop:chain}]
Suppose $\cC$ is an $(r,\delta)$-LRC with dimension $k$ and $(\cD_1,\ldots, \cD_M)$ is a maximal chain of subcodes in~$\mathscr{D}$. The dimension of the direct sum of $\cD_1$ to $\cD_M$ satisfies
\begin{equation}
n-k \geq \dim(\cD_1\oplus\cdots\oplus\cD_M) \geq M(\delta-1).
\label{eq:chain_dim}
\end{equation}
Let $n'$ be the size of the support of $\cD_1\oplus\cdots\oplus\cD_M$. If $n'=n$, the inequality in \eqref{eq:chain_proof} follows from \eqref{eq:chain_dim} directly. We can thus assume without loss of generality that $n'<n$. By permuting the code symbols if necessary, we may assume that 
$$\supp(\cD_1\oplus\cdots\oplus\cD_M) = \{1,2,\ldots, n'\}, 
$$
i.e., the last $n-n'$ components in the direct sum $\cD_1\oplus \cdots \oplus \cD_M$ are all zero. 

Let $i$ be an index in $\{n'+1, n'+2,\ldots, n\}$ and $\cD'$ be a subcode in $\mathscr{D}$ whose support contains $i$ (such a subcode $\cD'$ exists by Condition (a')). We claim that the index set
$$
\mathcal{X} = \supp(\cD') \setminus \supp(\cD_1\oplus\cdots\oplus\cD_M)
$$
has size strictly less than $\delta-1$. Otherwise, if $|\mathcal{X}| \geq \delta -1$, by Condition (c'), the restriction of $\cD'$ on $\mathcal{X}$ has dimension larger than or equal to $\delta - 1$. Then $\cD_1, \cD_2,\ldots, \cD_M, \cD'$ is a chain that contains $\cD_1, \cD_2, \ldots, \cD_M$. This contradicts the maximality of $\cD_1,\ldots, \cD_M$.

Therefore, by Condition (c') in the definition of LRC, we get $\dim(\cD'_{\mathcal{X}})=|\mathcal{X}|$. The columns of the corresponding local parity-check matrix indexed by $\mathcal{X}$ are linearly independent.  We can thus find a vector in $\cD'$ whose support contains $i$ but is confined in $\{1,2,\ldots, n'\}\cup\{i\}$. We call this vector $\mathbf{v}_i$, which is a codeword in $\cC^\perp$.

By repeating the above argument for each $i\in\{n'+1,n'+2,\ldots, n\}$, we can find $n-n'$ linearly independent codewords $\mathbf{v}_{n'+1}$, $\mathbf{v}_{n'+2}, \ldots, \mathbf{v}_n$ in $\cC^\perp$ that do not lie in $\cD_1\oplus \cdots \oplus \cD_M$. Since the dimension of $\cC^\perp$ is $n-k$, we get 
$$
n-k \geq M(\delta-1)+(n-n'),
$$
and this implies 
\begin{equation}
|\supp(\cD_1\oplus\cdots\oplus\cD_M)|-k \geq M(\delta - 1).
\label{eq:chain_proof}
\end{equation}

By Condition (b') in the definition of LRC, we have
$$
|\supp(\cD_1\oplus\cdots\oplus\cD_M)| \leq M(r+\delta - 1),
$$
as each repair group has size no larger than $r+\delta -1$. Combining with \eqref{eq:chain_proof}, we obtain
$$
 M(\delta-1)+k \leq |\supp(\cD_1\oplus\cdots\oplus\cD_M)| \leq  M(r+\delta-1),
$$
which can then be simplified to $M\geq \frac{k}{r}$. 
\end{proof}

\begin{proof}[Proof of Prop.~\ref{prop:residue_code}]
 Suppose $\cD$ is a subcode in $\cC^\perp$ that satisfies Conditions (b') and (c'). We can start with $\cD$ and find a maximal chain $\cD=\cD_1,\ldots, \cD_M$ with $\cD_1=\cD$. This can be done in a greedy manner. Given $\cD = \cD_1$, we can search for another subcode in $\mathscr{D}$ such that $\dim(\cD_1\oplus \cD_2)$ exceeds $\dim(\cD_1)$ by at least $\delta-1$. Since $\dim(\cC)$ is assumed to be strictly larger than $r$, we have $\lceil k/r \rceil \geq 2$ such a subcode must exist. We let $\cD_2$ be  one of such subcodes. We can continue similarly and create a chain that is as long as possible. Prop.~\ref{prop:chain} guarantees that the length of the resulting chain is larger than or equal to $\lceil k/r \rceil$.

Consider the first $m:=\lceil k/r \rceil-1$ subcodes $\cD_1,\ldots,\cD_m$ in this chain of subcodes. Let $\cX$ be the support of $\cD_1\oplus\cdots\oplus\cD_m$. The size of $\cX$ is at most $m(r+\delta-1)$, and the dimension of $\cD_1\oplus\cdots\oplus\cD_m$ is at least $m(\delta-1)$. 
We puncture the dual code $\cC^\perp$ by removing the code symbols with indices in $\cX$. The punctured code $(\cC^{\perp})_{\cX^c}$ has length no less than $n-m(r+\delta-1)$ and dimension no more than $n-k-m(\delta-1)$. This punctured code does not contain all vectors of length $|\cX^c|$, because
\begin{align}
 n-m(r+\delta-1) - [n-k-m(\delta-1)] 
&= k - mr \notag \\
&= k - \big\lfloor \frac{k-1}{r} \big\rfloor r \label{eq:trick} \\
&\geq 1. \notag 
\end{align}
In \eqref{eq:trick} we use an elementary fact that $m=\lceil k/r \rceil-1=\big\lfloor \frac{k-1}{r} \big\rfloor$. Hence, $(\cC^\perp)_{\cX^c}$ is not a trivial code.

The code $(\cC^{\perp})_{X^c}$ is the dual code of the shortened code of $\cC$ consisting of codewords that are all zero in $\cX$. This shortened code contains some nonzero codewords, as $(\cC^\perp)_{\cX^c}$ is not a trivial code. By the Singleton bound, the minimum distance of this shortened code is upper bounded by
\begin{align*}
 \dim( (\cC^\perp)_{\cX^c})+1 \leq n - k-  \big( \big\lceil \frac{k}{r} \rceil - 1\big)(\delta - 1) + 1 ,
\end{align*}
which is identical to the Singleton-like bound in~\eqref{eq:Singleton_like}. Because $\cC$ is assumed to be Singleton-optimal and shortening does not decrease minimum distance, we have equality 
$$
\dim(\cD_1\oplus\cD_2\oplus\cdots\oplus\cD_i) - 
\dim(\cD_1\oplus\cD_2\oplus\cdots\oplus\cD_{i-1}) = \delta - 1
$$
for $i=1,2,\ldots, m$. In particular, the dimension of $\cD=\cD_1$ is exactly equal to $\delta-1$.

For the second statement in the proposition, we let $\cR$ be the support of a repair group and let $\cD$ be the corresponding subcodes in $\cC^\perp$, i.e., $\cD$ is the shortened subcode of $\cC^\perp$ such that $\cD_\cR$ and $\cC_\cR$ are dual to each other. From the previous paragraph we have $\dim(\cD)=\delta-1$, and by the nullity theorem of linear algebra, $\dim(\cC_\cR) = |\cR| - \dim(\cD) = |\cR|-\delta+1$. By applying the Singleton bound again, we obtain $d_{\mathrm{min}}(\cC_\cR) \leq |\cR| - \dim(\cC_\cR) = \delta-1$.
\end{proof}

\section{Proof of Proposition~\ref{prop:GTP}}

We first show that the rows of the matrix $H$ are linearly independent. For $i=1,2,\ldots, n_1,$ let $\bc_i$ denote a vector of length $\delta-1$, and for $j=1,2,\ldots, \mu$, and let $\bd_j$ denote a vector of length~$\nu$. Consider the concatenated vector
\begin{equation}
(\bc_1, \bc_2,\ldots, \bc_{n_1}, \bd_1, \bd_2,\ldots, \bd_\mu).
\label{eq:concatenated_vector}
\end{equation}
Suppose this vector is in the left null-space of $H$. For $\ell=1,2,\ldots, n_1$, the $\ell$-th block of columns in $H$ yields the following equality,
$$
\bc_1 B_1 + (a_{1\ell} \bd_1 +  a_{2 \ell} \bd_2 + \cdots + a_{\mu \ell} \bd_\mu) B_2  = \mathbf{0}.
$$
Using the assumption that the rows of $B_1$ and $B_2$ are linearly independent, we conclude that the components in $\bc_1$ and $a_{1\ell} \bd_1 +  a_{2\ell} \bd_2 + \cdots + a_{\mu \ell} \bd_\mu$ are all zero.

For $j=1,2,\ldots \mu$, let the $j$-th component in vector $\bd_\ell$ be denoted by $d_\ell^j$. Because $a_{1\ell} \bd_1 +  a_{2\ell} \bd_2 + \cdots + a_{\mu \ell} \bd_\mu$ is a zero vector for $\ell=1,2,\ldots, n_1$, we have
$$
(d_1^j, d_2^j, \ldots , d_\mu^j) \cdot A_2 = \mathbf{0} 
$$
for all $j$. Since the rows of $A_2$ are linearly independent, we get $d_\ell^j=0$ for $j=1,2,\ldots, n_1$ and $\ell=1,2,\ldots, \mu$. This proves that if the concatenated vector in~\eqref{eq:concatenated_vector} is in the left null space of $H$, then it must be the zero vector. 

Hence, the parity-check matrix $H$ defines a linear code with dimension
$$
  n_1(r+\delta-1) - [n_1(\delta-1) + \mu\nu] = n_1r - \mu \nu.
$$

We next calculate the minimum distance of the linear code defined by $H$. Write a codeword $\bu$ as a concatenated vector
\begin{equation}
\bu = \begin{bmatrix} \bu_1 & \bu_2 & \cdots & \bu_{n_1} \end{bmatrix},
\label{eq:concatenated_codeword}
\end{equation}
where $\bu_\ell$ is a row vector of length $r+\delta-1$, for $\ell=1,2,\ldots,n_1$. The defining condition $H \bu^T$ can be broken into
\begin{align}
B_1 \bu_\ell^T &= \mathbf{0}, \ \text{for } \ell=1,2,\ldots, n_1, \label{eq:proof1} \\
\sum_{\ell=1}^{n_1} a_{j\ell} B_2 \bu_\ell^T &= \mathbf{0}, \ \text{for } j=1,2,\ldots, \mu. \label{eq:proof2}
\end{align}

Suppose that the codeword $\bu$ is nonzero. We can find a non-empty index set $\mathcal{L}$ such that $\bu_{\ell} \neq \mathbf{0}$ when and only when $\ell \in \mathcal{L}$. We consider two cases. Firstly, suppose that $r=\nu$, i.e., the matrix $B:=\begin{bmatrix} B_1 \\ B_2 \end{bmatrix}$ is a nonsingular square matrix. Because $B_1 \mathbf{u}_\ell^T= \mathbf{0}$, the Hamming weight of $\bu_\ell$ is larger than or equal to $\delta$. Since $B_1 \bu_\ell^T=\mathbf{0}$ and $B$ is nonsingular, the product $ B_2 \bu_{\ell}^T$ must be nonzero for all $\ell \in \mathcal{L}$. The hypothesis that $A_2$ is a parity-check matrix of an $[n_1, n_1-\mu]$ MDS code, together with the condition in \eqref{eq:proof2}, imply that $|\mathcal{L}|\geq \mu+1$. Hence, the minimum distance of the code obtained by Construction~1 is larger than or equal to $\delta(\mu+1)$. The proof of the first case is completed by noting that there is indeed a codeword $\mu$ with Hamming weight exactly equal to $\delta(\mu+1)$.

 Next, we assume that $r > \nu$. Similar to the first case described in the previous paragraph, we can find codewords with Hamming weight $\delta(\mu+1)$. On the other hand, because $\begin{bmatrix} B_1 \\ B_2 \end{bmatrix}$ is a parity-check matrix of an $[r+\delta-1, r-\nu,\delta+\nu]$ MDS code, we can find codewords with Hamming weight equal to $\delta+\nu$. Indeed we can set one of the $\bu_\ell$'s to be a vector with Hamming weight precisely $\delta+\nu$, and the rest to the zero vector. This shows that the minimum distance of the code obtained by Construction~1 is no less than the minimum of $\delta(\mu+1)$ and $\delta+\nu$.



\end{document}